\newcommand{\ignore}[1]{}
\newtheorem{theorem}{Theorem}
\newtheorem{lemma}[theorem]{Lemma}
\newtheorem{claim}[theorem]{Claim}
\newenvironment{proof}{

\noindent{\bf Proof:}} {\hfill$\blacksquare$

}
\newcommand{\initOneLiners}{%
    \setlength{\itemsep}{0pt}
    \setlength{\parsep }{0pt}
    \setlength{\topsep }{0pt}
%      \usecounter{myLISTctr}
}
\newenvironment{OneLiners}[1][\ensuremath{\bullet}]
    {\begin{list}
        {#1}
        {\initOneLiners}}
    {\end{list}}
\newcommand{\sse}{\subseteq}
\def\lrp{$k$-LocVRP\xspace}
\def\opt{{\sf Opt}\xspace}
\newcommand{\MST}{\ensuremath{\textrm{MST}}}
\def\kmf{$k$ median forest\xspace}
\title{Locating Depots for Capacitated Vehicle Routing}
\author{
Inge Li G{\o}rtz\thanks{Technical University of Denmark.} \and Viswanath Nagarajan\thanks{IBM T.J. Watson Research
Center. } }
\date{}
\begin{document}

\maketitle
\begin{abstract}
We study a location-routing problem in the context of capacitated vehicle routing. The input to \lrp is a set of demand
locations in a metric space and a fleet of $k$ vehicles each of capacity $Q$. The objective is to {\em locate} $k$
depots, one for each vehicle, and {\em compute routes} for the vehicles so that all demands are satisfied and the total
cost is minimized. Our main result is a constant-factor approximation algorithm for \lrp. To achieve this result, we
reduce \lrp to the following generalization of $k$ median, which might be of independent interest. Given a metric
$(V,d)$, bound $k$ and parameter $\rho\in\mathbb{R}_+$, the goal in the {\em $k$ median forest} problem is to find
$S\sse V$ with $|S|=k$ minimizing:
$$\sum_{u\in V} d(u,S) \quad + \quad \rho\cdot d\big(\,\mbox{MST}(V/S)\,\big),$$
where $d(u,S)=\min_{w\in S} d(u,w)$ and $\mbox{MST}(V/S)$ is a minimum spanning tree in the graph obtained by
contracting $S$ to a single vertex. We give a $(3+\epsilon)$-approximation algorithm for $k$ median forest, which leads
to a $(12+\epsilon)$-approximation algorithm for \lrp, for any constant $\epsilon>0$. The algorithm for  $k$ median
forest is $t$-swap local search, and we prove that it has locality gap $3+\frac2t$; this generalizes the corresponding
result for $k$ median~\cite{AGKMMP04}.

Finally we consider  the $k$ median forest problem when there is a different cost function $c$ for the MST part, i.e.
the objective is $\sum_{u\in V} d(u,S) \,+ \,c (\,\mbox{MST}(V/S)\,)$. We show that the locality gap for this problem
is unbounded even under multi-swaps, which contrasts with the $c=d$ case. Nevertheless,  we obtain a constant-factor
approximation algorithm, using an LP based approach along the lines of~\cite{KKNSS11}.

\end{abstract}

\section{Introduction}\label{sec:intro}
In typical facility location problems, one wishes to locate centers and connect clients directly to centers at minimum
cost. On the other hand, the goal in vehicle routing problems (VRPs) is to compute routes for vehicles originating from
a given set of depots. Location routing problems represent an integrated approach, where we wish to  make combined
decisions on facility location and vehicle routing. This is a widely researched area in operations research, see eg.
surveys~\cite{BWW87,L88,L89,BJS95,MJS98,NS07}. Most of these papers deal with exact methods or heuristics, without any
performance guarantees. In this paper we present an approximation algorithm for a location routing problem in context
of capacitated vehicle routing.

Capacitated vehicle routing (CVRP) is an extensively studied vehicle routing problem~\cite{TV02} which involves
distributing identical items to a set of demand locations. Formally we are given a metric space $(V,d)$ on vertices $V$
with distance function $d:V\times V\rightarrow \mathbb{R}_+$ that is symmetric and satisfies triangle inequality. Each
vertex $u\in V$ demands $q_u$ units of the item. We have available a fleet of $k$ vehicles, each having capacity $Q$
and located at specified depots. The goal is to distribute items using the $k$ vehicles at minimum total cost. There
are two versions of CVRP depending on whether or not the demand at a vertex may be satisfied over multiple visits. We
focus on the  {\em unsplit delivery} version in the paper, while noting that this also implies the result under
split-deliveries.
%while noting the result extends easily for the {\em unsplit delivery} version also.
%where the demand at a vertex must be satisfied in a single visit (in this case we also

We consider the question ``where should one locate the $k$ depots so that the resulting vehicle routing solution has
minimum cost?'' This is called {\em $k$-location capacitated vehicle routing} (\lrp). The \lrp problem bears obvious
similarity to the well-known {\em $k$ median} problem, where the goal is to choose $k$ centers to minimize the sum of
distances of each vertex to its closest center. The difference is that our problem also takes the routing aspect into
account.  Not surprisingly, our algorithm for \lrp builds on approximation algorithms for the $k$ median problem.

In obtaining an algorithm for \lrp we introduce the {\em  $k$ median forest} problem, which might be of some
independent interest. The objective here is a combination of $k$-median and minimum spanning tree. Given metric
$(V,d)$, bound $k$ and parameter $\rho\in\mathbb{R}_+$, the goal is to find $S\sse V$ with $|S|=k$ minimizing
$\sum_{u\in V} d(u,S) \, + \, \rho\cdot d\big(\,\mbox{MST}(V/S)\,\big)$.  Here $d(u,S)=\min_{w\in S} d(u,w)$ is the
minimum distance between $u$ and an $S$-vertex; $\mbox{MST}(V/S)$ is a minimum spanning tree in the graph obtained by
contracting $S$ to a single vertex. Note that when $\rho=0$ we have the $k$-median objective, and $\rho$ being very
large reduces to MST.
%We give a 5-approximation algorithm for $k$ median forest,
\subsection{Our Results}
The main result is the following.
\begin{theorem}\label{th:lcvrp}
There is a $(12+\epsilon)$-approximation algorithm for \lrp, for any constant $\epsilon>0$.
\end{theorem}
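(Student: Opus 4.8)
\medskip\noindent\emph{Proof plan.}
The strategy is to reduce \lrp to \kmf with a constant loss and then invoke the $(3+\epsilon)$-approximation for \kmf. Fix an optimal \lrp solution: depots $S^\star$ with $|S^\star|=k$, and routes $\{R_i\}$, where route $R_i$ is based at $s_i^\star\in S^\star$ and serves the demand set $V_i$ (so $q(V_i)\le Q$). Two Haimovich--Rinnooy-Kan-type lower bounds are available. (i) Since the tour $R_i$ visits $s_i^\star$ and every vertex of $V_i$, we have $d(R_i)\ge 2\,d(u,s_i^\star)$ for each $u\in V_i$; weighting by $q_u$, summing, and using $q(V_i)\le Q$ gives $d(R_i)\ge \frac2Q\sum_{u\in V_i} q_u\,d(u,s_i^\star)$, and summing over $i$ yields $\opt\ge \frac2Q\sum_{u\in V} q_u\,d(u,S^\star)$. (ii) $d(R_i)\ge d(\MST(V_i\cup\{s_i^\star\}))$, and the union of these trees, after contracting $S^\star$, is a connected subgraph of $V/S^\star$, hence has weight at least $d(\MST(V/S^\star))$; therefore $\opt\ge d(\MST(V/S^\star))$. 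Combining,
$$2\cdot\opt \;\ge\; \tfrac2Q\textstyle\sum_{u\in V} q_u\,d(u,S^\star)\;+\;d\big(\MST(V/S^\star)\big).$$
Writing $F_\rho(S):=\sum_u q_u\,d(u,S)+\rho\cdot d(\MST(V/S))$ and choosing $\rho:=Q/2$, the right-hand side is exactly $\frac2Q\,F_\rho(S^\star)$, so $F_\rho(S^\star)\le Q\cdot\opt$. (For non-unit demands we use the demand-weighted version of \kmf; the $t$-swap local-search analysis, and hence the $(3+\epsilon)$ bound, carry over verbatim. Alternatively one may assume unit demands.)

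Run the $(3+\epsilon)$-approximation for \kmf with this $\rho$ to obtain $S$ with $F_\rho(S)\le(3+\epsilon)F_\rho(S^\star)\le(3+\epsilon)\,Q\cdot\opt$, i.e.
$$\tfrac2Q\textstyle\sum_{u\in V} q_u\,d(u,S)\;+\;d\big(\MST(V/S)\big)\;\le\;(6+2\epsilon)\,\opt.$$
The remaining step is a \emph{multi-depot Haimovich--Rinnooy-Kan upper bound}: for any $S$ with $|S|=k$ there is a feasible \lrp solution with depot set $S$ of cost at most $\frac4Q\sum_{u\in V} q_u\,d(u,S)+2\,d(\MST(V/S))$. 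Granting this, the solution produced has cost at most $2\big(\tfrac2Q\sum_u q_u\,d(u,S)+d(\MST(V/S))\big)\le(12+4\epsilon)\,\opt$, and rescaling $\epsilon$ gives the theorem. (The split-delivery version then follows, since every unsplit solution is also a feasible split solution.)

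To prove the routing claim I would apply iterated tour partitioning to $T:=\MST(V/S)$. Un-contracting $S$ turns $T$ into a forest on $V$ whose components are rooted trees, one per depot: replace each edge $s$--$g_j$ of $T$ (with $g_j$ a ``gateway'' vertex) by the actual edge $g_j$--$\sigma(g_j)$, where $\sigma(g_j)\in S$ is a depot nearest $g_j$; the total length is unchanged, $d(T)$. Within each rooted tree I would double it, shortcut its Euler tour to a Hamiltonian path of the demand vertices, cut the path into consecutive blocks of demand at most $Q$, and route each block out-and-back from that component's depot; using a random offset for the cuts, a vertex $u$ incurs its ``re-entry'' cost only with probability about $q_u/Q$, which produces the $\frac1Q\sum_u q_u\,d(u,\cdot)$ term. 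Subtrees of demand exceeding $Q$ must first be split along $T$ and reattached to a nearby depot, and it is the total cost of all these stems/splits that forces the coefficient $\tfrac4Q$ (rather than $\tfrac2Q$) on the radial term. The delicate point — and the main obstacle — is to perform this decomposition so that the bound is $2\,d(T)$ plus $\tfrac4Q\sum_u q_u\,d(u,S)$ and not a larger multiple of $d(T)$: one must charge the detour between $u$'s nearest depot and the depot actually serving $u$ against $T$ without paying for it twice, and control the interaction between block boundaries and depot ``territories'' along the path. Everything else — the two lower bounds, the choice $\rho=Q/2$, the arithmetic, and the existence of the \kmf algorithm — is routine given the statements above.
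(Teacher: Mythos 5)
Your proposal is correct and follows essentially the same route as the paper: reduce \lrp to \kmf via the Haimovich--Rinnooy-Kan radial and MST lower bounds (which you rederive, whereas the paper cites them from earlier work on CVRP together with the matching $2\cdot\flow+2\cdot\tree$ upper bound), set $\rho=Q/2$, and invoke the $(3+\epsilon)$-approximation for \kmf to lose a factor of $4(3+\epsilon)$. The only item you flag as a possible obstacle --- the multi-depot iterated-tour-partitioning upper bound with constants $(2,2)$ --- is exactly the ingredient the paper imports as a black box from prior work, so your plan closes once that citation is in place.
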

Our algorithm first reduces \lrp to $k$ median forest, at the loss of a constant approximation factor of four. This
step is fairly straightforward and makes use of
%light approximate shortest path trees (LAST)~\cite{KRY95} and
known lower-bounds~\cite{HK85} for the CVRP problem. We present this reduction in Section~\ref{sec:redn}.

Then we prove the following result in Section~\ref{sec:kmf} which implies Theorem~\ref{th:lcvrp}.
\begin{theorem}\label{th:kmed-forest}
There is a $(3+\epsilon)$-approximation algorithm for $k$ median forest, for any constant $\epsilon>0$.
\end{theorem}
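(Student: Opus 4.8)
The plan is to analyze the natural $t$-swap local search algorithm for $k$ median forest and show it has locality gap $3 + 2/t$; choosing $t = \lceil 2/\epsilon \rceil$ and combining with a standard argument that local search converges in polynomial time (after rounding costs, or stopping when the improvement is below a $(1-\delta)$ multiplicative threshold) then yields the $(3+\epsilon)$-approximation. So the bulk of the work is the locality gap bound. Fix a locally optimal solution $S$ and a global optimum $S^*$, both of size $k$. Let $\kmed(S) = \sum_{u\in V} d(u,S)$ and let $\MST(S) = d(\MST(V/S))$, so the objective is $\kmed(S) + \rho\cdot\MST(S)$. For each swap move, we delete a set $A\subseteq S$ and add a set $B\subseteq S^*\setminus S$ with $|A|=|B|\le t$; local optimality says the objective does not decrease.

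First I would handle the $k$-median part exactly as in~\cite{AGKMMP04}: build the standard bijection-based test swaps between $S$ and $S^*$ (partitioning the "captured" centers so each optimal center is swapped in against a center that captures it, and each local center is removed at most twice), reassigning each client either to its $S^*$-center directly or to the $S$-center serving the $S^*$-center that serves it. Summing over these swaps, the $k$-median reassignment cost is bounded by $(3+2/t)\,\kmed(S^*) - \kmed(S)$ in the usual way — but here one must simultaneously track what happens to the MST term, since the same swaps must be charged against the full objective, not just $\kmed$. The key new ingredient is therefore: when we swap in $B$ and swap out $A$, we must exhibit a spanning tree of $V/(S\setminus A \cup B)$ whose cost can be charged. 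For the MST part I would start from $\MST(V/S^*)$ and $\MST(V/S)$ together: contract $S^*$ in $\MST(V/S)$ and contract $S$ in $\MST(V/S^*)$, and use the union of (a suitable fraction of) these edge sets to reconnect the graph after the swap. The clean fact to exploit is that for nested or overlapping center sets, $d(\MST(V/(S\cup S^*))) \le d(\MST(V/S))$ and $\le d(\MST(V/S^*))$, and more importantly that after removing $A$ from $S$, the extra reconnection cost can be paid for using edges of $\MST(V/S^*)$ restricted to the components that $A$ was responsible for — because in the contracted graph $V/S^*$, everything is already within $\MST(S^*)$ cost of $S^*\subseteq$ (new center set $\cup$ old structure).

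Concretely, I would argue that for each test swap $(A,B)$, $\MST(S\setminus A\cup B) \le \MST(S) + (\text{local change})$, and that summing the local changes over all $t$-swaps in the family telescopes to at most $(3+2/t)\MST(S^*) - \MST(S)$, mirroring the median bound; then adding the median inequality (weighted by $1$) and the MST inequality (weighted by $\rho$) gives $0 \le (3+2/t)\big(\kmed(S^*)+\rho\,\MST(S^*)\big) - \big(\kmed(S)+\rho\,\MST(S)\big)$, which is exactly the claimed locality gap. The reason the combined charging works at all — and the reason the paper insists on the cost function for the MST being the \emph{same} $d$ (the abstract notes the general-$c$ case has unbounded locality gap) — is that a client reassignment of cost $d(u,S^*)+\cdots$ in the median analysis uses the same metric that controls the spanning-tree reconnection, so a single swap can simultaneously be "good" for both terms.

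The main obstacle I expect is the MST accounting under swaps: unlike $k$-median, the MST cost is not separable over clients, so one cannot simply reassign vertex-by-vertex. The delicate point is showing that deleting $A\subseteq S$ does not blow up the spanning tree cost by more than what the corresponding optimal edges (those in $\MST(V/S^*)$ incident to the region captured by $A$) can pay for, while making sure each optimal edge and each local center is used a bounded number of times across the whole family of test swaps — the same "used at most twice" bookkeeping as in the median proof, but now applied to tree edges rather than clients. Getting these two bookkeeping schemes (for clients and for MST edges) to be compatible under one common family of $t$-swaps is the crux; once that is set up, the arithmetic is the standard $3+2/t$ computation from~\cite{AGKMMP04}.
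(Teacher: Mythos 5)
Your high-level plan coincides with the paper's: analyze $t$-swap local search, reuse the Gupta--Tangwongsan swap family $\swap$ built for $k$-median, and aim for the locality gap $3+\frac{2}{t}$ by showing the same swaps also control the MST term. You also correctly flag the crux: MST cost is not separable over clients, so per-swap reconnection must be charged against optimal tree edges without over-using any of them. However, the proposal stops at that flag and does not resolve it, and the sketch you give of how to pay for reconnection is too coarse to turn into a proof. The paper's actual mechanism has three ingredients that are absent from your outline. First, it lifts to an auxiliary graph $H$ on $V\cup\{r\}$ with ``root edges'' $\{(r,v):v\in F\}$ and $\{(r,v):v\in F^*\}$, so that $MST(V/F)$ and $MST(V/F^*)$ become spanning trees $M$, $M^*$ of $H$. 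Second, it applies a matroid (spanning-tree) exchange theorem to a partition of $M^*$ whose parts are exactly the swap-in sets $A$ plus singleton true edges; the induced pairing splits the true edges of $M$ into a set $M'$ handled by a simple MST-minimality argument (Claim~\ref{cl:kmf-m'}) and a set $M''$ that is charged to swaps (Claim~\ref{cl:kmf-Ea}). Third, for each swap $(D,A)$ it bounds $\tree(F-D+A)-\tree(F)$ by building a reconnection set $N$ from the out-edge forest $T_f$ over components $C_f$, via a path-following argument (Claims~\ref{cl:tree-loc2}, \ref{cl:tree-loc3}) applied to each \emph{minimal} disconnected subfamily $U\subseteq D$; the factor $2$ on $\sum_{f\in D}d(T_f)$ in Lemma~\ref{lem:kmf-main} comes from paying both for $\bigcup_{f\in D}T_f$ and for $N$. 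None of this follows from ``start from both MSTs and use the union of a suitable fraction of edges.''

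Your explanation of why uniformity $c=d$ matters is also off. You attribute it to a need to make client-reassignment and tree-reconnection costs commensurable inside a single combined charging; but the paper's median and MST analyses are entirely separate inequalities that are simply added at the end. Uniformity enters at exactly one self-contained point inside the MST argument: in the construction of $N$, one adds the edge $(f^*,\eta(f^*))$, where $\eta$ is defined by nearest-neighbor under the \emph{median} metric $d$, and bounds its cost by a path length $d(f^*,f')\le\sum_{f\in U}d(T_f)$ inside $E\cap M^*$. If the MST metric $c$ differed from $d$, smallness of $d(f^*,\eta(f^*))$ would say nothing about $c(f^*,\eta(f^*))$, and this bound would fail -- which is consistent with the unbounded locality gap in the non-uniform case. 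So the proposal has the right skeleton and the right intuition at the component level, but the exchange-theoretic bookkeeping that makes the MST telescope actually work, and the precise location of the uniformity assumption, both need to be supplied.
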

This is the technically most interesting part of the paper. The algorithm is straightforward: perform local search
using multi-swaps. It is well known that (single swap) local search is optimal for the minimum spanning tree problem.
Moreover, Arya et al.~\cite{AGKMMP04} showed that $t$-swap local search achieves exactly a $(3+\frac2t)$-approximation
ratio for the $k$-median objective (this proof was later simplified by Gupta and Tangwongsan~\cite{GT08}). Thus one can
hope that local search performs well for $k$ median forest, which is a combination of both MST and $k$-median
objectives. However, the local moves used in proving the quality of local optima are different for the MST and
$k$-median objectives. Our proof shows we can {\em simultaneously} bound both MST and $k$-median objectives using a
common set of local moves. In fact we prove that the locality gap for \kmf under $t$-swaps is also $(3+\frac2t)$.
%We make use of the simpler proof for $k$-median local search in~\cite{GT08}.
Somewhat surprisingly, it suffices to consider exactly the same set of swaps from~\cite{GT08} to establish
Theorem~\ref{th:kmed-forest}, although~\cite{GT08} does not take into account any MST contribution. The interesting
part of the proof is in bounding the change in MST cost due to these swaps--- this makes use of non-trivial exchange
properties of spanning trees and properties of the potential swaps from~\cite{GT08}. We remark that the $k$-median,
$k$-tree (i.e. choose $k$ centers $S$ to minimize $d(MST(V/S))$), and \kmf objectives are incomparable in general:
Appendix~\ref{app:example} gives an instance where near-optimal solutions to these three objectives are mutually far
apart.

Finally we consider the {\em non-uniform  $k$ median forest} problem in Section~\ref{sec:non-unif-kmf}. This is an
extension of $k$ median forest where there is a different cost function $c$ for the MST part in the objective. Given
vertices $V$ with two metrics $d$ and $c$, and bound $k$, the goal is to find $S\sse V$ with $|S|=k$ minimizing
$\sum_{u\in V} d(u,S) \, + \, c\big(\,\mbox{MST}(V/S)\,\big)$. Here $\mbox{MST}(V/S)$ is a minimum spanning tree in the
graph obtained by contracting $S$ to a single vertex, {\em under metric $c$}. In contrast to the uniform case $c=d$, we
show that the locality gap here is unbounded even for multi-swaps. In light of this, Theorem~\ref{th:kmed-forest}
appears a bit surprising. Still, we show that a different LP-based approach yields:
\begin{theorem}\label{th:gen-kmed-forest}
There is a 16-approximation algorithm for non-uniform $k$ median forest.
\end{theorem}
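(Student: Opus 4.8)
The plan is to follow the LP-rounding strategy of~\cite{KKNSS11} for the matroid-median / related problems, adapted to handle the two different metrics $d$ (assignment cost) and $c$ (MST cost). First I would write a natural LP relaxation: variables $y_i\in[0,1]$ for opening a center at $i\in V$, variables $x_{ui}\in[0,1]$ for assigning $u$ to center $i$, and variables $z_e\in[0,1]$ indicating that edge $e$ is used by the spanning tree on the contracted graph $V/S$. The objective is $\sum_{u,i} d(u,i)\,x_{ui} + \sum_e c_e\, z_e$, with constraints $\sum_i x_{ui}=1$, $x_{ui}\le y_i$, $\sum_i y_i = k$, and a spanning-tree (or spanning-connectivity) constraint on the $z$-variables in the graph where the opened set is contracted --- most cleanly encoded via a cut/partition constraint: for every partition of $V$ that is not "refined'' by the open centers, the $z$-edges crossing it have total fractional value at least (number of parts $-1$), appropriately scaled by $y$. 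Because the MST polytope over the contracted graph is integral and separable, this LP is solvable (or one can lose a $(1+\epsilon)$ via the spanning-tree LP).

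Next I would round in two coordinated phases. Phase one handles the $k$-median-like part: use the standard filtering/clustering step (as in Charikar et al.\ and in~\cite{KKNSS11}) to select a set of "client representatives'' and consolidate the fractional $y$-mass into bundles of total mass $\ge \tfrac12$ around each representative, so that each representative is guaranteed an open center within a constant factor of its LP assignment cost; then round to exactly $k$ open centers $S$ while controlling $\sum_u d(u,S)$ against the LP assignment cost. Phase two bounds the MST: I would argue that contracting the integrally chosen $S$ and taking an actual MST costs at most a constant times $\sum_e c_e z_e$ plus the movement cost incurred when we shift the fractional centers to the integral ones. The key tool is that moving a center by distance $\delta$ (under $c$) changes $c(\mathrm{MST}(V/S))$ by $O(\delta)$, so the MST overhead is bounded by $O\!\big(\sum_u c(\text{client}, \text{its LP center}) \big)$ --- and this in turn must be charged back through the $d$-metric via the triangle-type relationship that the rounding preserves. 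Summing the two phases with the constants from~\cite{KKNSS11}-style analysis yields the factor $16$; I would track constants carefully, since the $3+\epsilon$ available from Theorem~\ref{th:kmed-forest} is useless here (the locality gap is unbounded) and all slack must come from LP rounding.

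The main obstacle, and the place where non-uniformity really bites, is coordinating the two metrics in the charging argument: the LP pays for MST edges under $c$, but the rounding moves centers by amounts naturally bounded under $d$ (the assignment metric), so we cannot directly say "the MST only grows by the $c$-distance centers moved.'' The fix I would pursue is to build the spanning structure not on the rounded centers directly but on the LP's fractional centers' support, then bound the cost of reconnecting as centers are consolidated: each consolidation step merges fractional mass that the LP already "paid to connect,'' so the incremental $c$-cost telescopes against $\sum_e c_e z_e$ rather than against any $d$-quantity. Making this accounting tight --- ensuring the same set of rounding decisions simultaneously controls $\sum_u d(u,S)$ and $c(\mathrm{MST}(V/S))$ --- is the crux, exactly mirroring why the combinatorial local-search proof of Theorem~\ref{th:kmed-forest} fails here, and why an LP, which can "pre-pay'' both costs in a correlated way, succeeds.
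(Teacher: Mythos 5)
Your setup of the LP is essentially right, and your Stage-I filtering (Charikar-et-al.\ style consolidation as in~\cite{KKNSS11}) matches the paper. But the core of your argument---Phase two, where you bound the MST change by a ``move a center by $\delta$ under $c$ changes $c(\mathrm{MST}(V/S))$ by $O(\delta)$'' argument---is precisely the step that does not go through, and you already sense this: you note that the rounding moves centers by amounts bounded in $d$, not $c$, and then gesture at a telescoping fix against $\sum_e c_e z_e$. That fix is the gap. There is no bound relating the $d$-consolidation moves to $c$-cost, and no reason the reconnection cost telescopes against the LP's $z$-term; indeed, the unbounded locality gap in Section~\ref{sec:non-unif-kmf} is exactly evidence that direct charging of the MST against movement distances fails when $d\ne c$.

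What the paper actually does is avoid any movement-based charging of the MST altogether. In Stage I, only the $x$-variables are consolidated; $y^*$ and $z^*$ are left untouched, so $\sum_e c_e z^*_e$ is preserved exactly. Then a \emph{second} LP ($\mathsf{LP_{new}}$) is written in $(y,z)$ only, with the laminar constraints coming from the representative sets $\mathcal{P}(u)$, the pseudoroot constraints, the cardinality constraint, and the spanning-tree polytope constraint $(y,z)\in\mathbb{SP}(H)$. The decisive observation is that any basic feasible solution of $\mathsf{LP_{new}}$ is \emph{integral}: the tight constraints form the intersection of two laminar families (one from the representative/pseudoroot/cardinality constraints on $y$, one from an uncrossed chain of tight spanning-tree rank constraints on $(y,z)$), and such a system is totally unimodular. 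So there is never a moment where $y$ is ``rounded'' and the MST has to be re-estimated; $y$ and $z$ become integral simultaneously, which is exactly why the argument is immune to the mismatch between $d$ and $c$. The factor $16$ then comes from $4$ (Stage-I loss) plus $3\cdot 4=12$ (Lemma 3.3 of~\cite{KKNSS11} applied to the integral solution of $\mathsf{LP_{new}}$). You should replace your Phase-two movement argument with this integrality-of-the-reformulated-LP argument; your current plan would not yield a constant factor.
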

This algorithm follows closely that for the matroid median problem~\cite{KKNSS11}. We consider the natural LP
relaxation and round it in two phases. The first phase sparsifies the solution (using ideas from~\cite{CGTS99}) and
allows us to reformulate a new LP-relaxation using fewer variables; this is identical to~\cite{KKNSS11}. The second
phase solves the new LP-relaxation, which we show  to be integral.

\subsection{Related Work}
The basic capacitated vehicle routing problem involves a single fixed depot. There are two versions of CVRP: {\em split
delivery} where the demand of a vertex may be satisfied over multiple visits; and {\em unsplit delivery} where the
demand at a vertex must be satisfied in a single visit (in this case we also assume $\max_{u\in V} q_u\le Q$). Observe
that the optimal value under split-delivery is at most that under unsplit-delivery. The best known approximation
guarantee for split-delivery is $\alpha+1$~\cite{HK85,AG90} and for unsplit-delivery is $\alpha+2$~\cite{AG87}, where
$\alpha$ denotes the best approximation ratio for the Traveling Salesman Problem. We make use of the following known
lower bounds for CVRP with single depot $r$: the minimum TSP tour on all demand locations, and $\frac{2}Q\sum_{u\in V}
d(r,u)\cdot q_u$. Similar constant factor approximation algorithms~\cite{LS90} are also known for the CVRP with
multiple depots which was defined in the introduction.

The $k$ median problem is a widely studied location problem and has many constant factor approximation algorithms.
Starting with the LP-rounding algorithm of~\cite{CGTS99}, the primal-dual approach was used in~\cite{JV01}, and also
local search~\cite{AGKMMP04}. A simpler analysis of the local search algorithm was given in~\cite{GT08}; we make use of
this in our proof for the $k$ median forest problem. Several variants of $k$ median have also been studied. One that is
relevant to us is the matroid median problem~\cite{KKNSS11}, where the set of open centers are constrained to be
independent in some matroid; our approximation algorithm for the non-uniform $k$ median forest problem is based on this
approach.

Recently~\cite{HKM10} studied (among other problems) a facility-location variant of CVRP: there are opening costs for
depots and the goal is to open a set of depots and find vehicle routes so as to minimize the sum of opening and routing
costs. The \lrp problem in this paper can be thought of as the $k$-median variant of~\cite{HKM10}. In~\cite{HKM10} the
authors give a 4.38-approximation algorithm for facility-location CVRP. Following an approach similar to~\cite{HKM10}
one can obtain a bicriteria approximation algorithm for \lrp, where more than $k$ depots are opened. However more work
is needed to obtain a true approximation, and this is where we need an algorithm for the $k$ median forest problem.

%\medskip
%\paragraph{Problem Definition} All problems are defined on a metric space $(V,d)$
%Both \lrp and $k$ median

\section{Reducing \lrp to $k$ median forest} \label{sec:redn}
\def\flow{{\sf Flow}}
\def\tree{{\sf Tree}}
\def\med{{\sf Med}}

Here we show that the \lrp problem can be reduced to $k$ median forest at the loss of a constant approximation factor.
This makes use of known lower bounds for CVRP~\cite{HK85,LS90,HKM10}.

For any $S\sse V$, let $\flow(S) := \frac2Q \, \sum_{u\in V} q_u\cdot d(u,S)$ and $\tree(S) = d(MST(V/S))$ be the
length of the minimum spanning tree in the metric obtained by contracting $S$. The following theorem is implicit in
previous work~\cite{HK85,LS90,HKM10}; this uses a natural MST splitting algorithm.
\begin{theorem}[\cite{HKM10}] \label{th:cvrp}
Given any instance of CVRP on metric $(V,d)$ with demands $\{q_u\}_{u\in V}$, vehicle capacity $Q$ and depots $S\sse
V$,
\begin{OneLiners}
 \item The optimal value (of the split-delivery CVRP) is at least $\max\{\flow(S),\,\tree(S)\}$.
 \item There is a polynomial time algorithm that computes an unsplit-delivery solution of length at most $2\cdot \flow(S)+2\cdot \tree(S)$.
% \item There are polynomial time algorithms to compute a split-delivery solution of length at most $\flow(S)+4\cdot \tree(S)$, and
% an unsplit-delivery solution of length at most $2\cdot \flow(S)+4\cdot \tree(S)$.
\end{OneLiners}
\end{theorem}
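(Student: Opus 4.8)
The plan is to establish the lower bound by two classical arguments and the upper bound by the standard ``MST-splitting'' heuristic. For the lower bound, fix an optimal split-delivery solution with routes $R_1,\dots,R_m$, where $R_j$ is a closed walk through some depot $r_j\in S$ delivering an amount $y^j_u\ge 0$ to each vertex $u$, so that $\sum_j y^j_u=q_u$ for every $u$ and $\sum_u y^j_u\le Q$ for every $j$. To see $\opt\ge\tree(S)$: the edges traversed by $R_1,\dots,R_m$, after $S$ is contracted to a single vertex, form a connected subgraph spanning $V/S$ (each $R_j$ is connected and meets $S$, and jointly the routes visit every vertex), so $\opt=\sum_j\mathrm{length}(R_j)\ge d(\MST(V/S))=\tree(S)$. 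To see $\opt\ge\flow(S)$: writing $R_j$ as $r_j,x_1,\dots,x_\ell,r_j$, every prefix and every suffix of $R_j$ is a walk from $r_j$ to $x_i$, hence $2\,d(r_j,x_i)\le\mathrm{length}(R_j)$; multiplying by $y^j_{x_i}$, summing over $i$, and using $\sum_i y^j_{x_i}\le Q$ gives $\mathrm{length}(R_j)\ge\frac2Q\sum_i y^j_{x_i}\,d(r_j,x_i)\ge\frac2Q\sum_i y^j_{x_i}\,d(x_i,S)$, and summing over $j$ (with $\sum_j y^j_u=q_u$) yields $\opt\ge\frac2Q\sum_u q_u\,d(u,S)=\flow(S)$.

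For the upper bound, compute $T:=\MST(V/S)$ and double it; the Euler tour is a closed walk of length $2\,\tree(S)$ through the contracted vertex $\hat S$, which once $\hat S$ is uncontracted visits every vertex of $V\setminus S$ and touches down at depots of $S$ exactly between consecutive components of $T\setminus\{\hat S\}$. Writing $\pi(v)\in S$ for a depot nearest $v$ (so $d(v,\pi(v))=d(v,S)$), the excursion $\pi(r_i)\to r_i\to\cdots\to r_i\to\pi(r_i)$ into a component is a valid route whenever that component's demand is at most $Q$, so it remains to cut the excursions of heavier components. We cut a running excursion whenever the accumulated demand would exceed $Q$: closing a route at a vertex $u$ and re-opening one there costs the detour $u\to\pi(u)\to u$ of length $2\,d(u,S)$, and for unsplit delivery we additionally peel off the single ``straddling'' vertex $u$ at each cut into its own route $\pi(u)\to u\to\pi(u)$, again of length $2\,d(u,S)$. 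Applying the standard shifting trick for iterated tour partitioning — start the first cut after a running-demand offset $\ell$ and average over $\ell\in\{0,\dots,Q-1\}$, so that any fixed vertex $u$ is a cut (resp.\ straddling) vertex for only a $q_u/Q$ fraction of offsets — the best offset makes the reconnection detours cost at most $\frac2Q\sum_u q_u\,d(u,S)=\flow(S)$ and the peeled singletons cost at most another $\flow(S)$. The result is an unsplit solution of length at most $2\,\tree(S)+2\,\flow(S)$, computed in polynomial time (MST, Euler tour, and the polynomially many relevant offsets).

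I expect all the real work to be in the upper bound: bounding the ``tree part'' and the ``$\flow$ part'' of the cost with these constants \emph{simultaneously} while honoring the unsplit constraint. Concretely, the bound degrades to $2\,\flow(S)$ (rather than the $\flow(S)$ attainable for split delivery) precisely because of the singleton routes needed to peel off straddling vertices, and one must verify that this peeling, together with the shifting argument, is compatible with the multiple-depot structure — which is exactly what passing to the contracted metric $V/S$ is meant to handle. The two lower bounds are routine.
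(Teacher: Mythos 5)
The paper does not prove this theorem; it states it with a citation (attributing it as implicit in \cite{HK85,LS90,HKM10}) and the remark that it ``uses a natural MST splitting algorithm.'' Your reconstruction is exactly that standard argument and is correct: the two lower bounds (connectivity of the union of routes in $V/S$ for $\tree(S)$, and the out-and-back distance weighted by delivered amounts and $\sum_i y^j_{x_i}\le Q$ for $\flow(S)$) are the classical ones, and the upper bound via doubling $\MST(V/S)$, Euler-touring, iterated tour partitioning with offset averaging, and peeling straddling vertices into singleton out-and-back routes is the intended MST-splitting heuristic. One small presentational note: when you charge both the ``reconnection detour'' and the ``peeled singleton'' at the straddling vertex $u$ for $2d(u,S)$ each, you are implicitly keeping $u$ as a through-vertex (not served) in the two adjacent routes and serving it only on its own route — that is a legitimate accounting and gives the stated $2\flow(S)$ after averaging over the $Q$ offsets, but it is worth saying explicitly so the reader sees why the detour at the cut can be charged to $u$ itself rather than to $u$'s tour neighbors.
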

%Discuss both splittable and unsplittable...

Based on this it is clear that the optimal value of the CVRP instance given depot positions $S$ is roughly given by
$\flow(S)+\tree(S)$, which is similar to the \kmf objective. The following lemma formalizes this reduction. We will
assume an algorithm for the $k$ median forest problem with vertex-weights $\{q_u:u\in V\}$, where the objective becomes
$\sum_{u\in V} q_u\cdot d(u,S) \, + \, \rho\cdot d\big(\,\mbox{MST}(V/S)\,\big)$.
%Though we defined the \kmf problem without the weights $q_u$s it is easy to

\begin{lemma}\label{lem:lrp2kmed}
If there is a $\beta$-approximation algorithm for $k$ median forest then there is a $4\beta$-approximation algorithm
for \lrp.
\end{lemma}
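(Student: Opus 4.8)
The plan is to prove Lemma~\ref{lem:lrp2kmed} by relating the optimal \lrp solution to an optimal $k$ median forest solution with the parameter $\rho$ chosen to be $\frac{2}{Q}$, so that the weighted $k$ median forest objective becomes exactly $\flow(S) + \tree(S)$ (up to the factor $\frac{2}{Q}$ folded into the weights). Let me think about how the approximation factors compose.

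The plan is to apply the $\beta$-approximation for $k$ median forest with parameter $\rho = Q/2$ and vertex weights $\{q_u\}_{u\in V}$, and then convert the returned depot set into an actual CVRP solution via the MST-splitting algorithm of Theorem~\ref{th:cvrp}(2). The first observation is that, after multiplying through by $Q/2$ (a global rescaling that does not affect approximation ratios), the weighted $k$ median forest objective $\sum_{u\in V} q_u\, d(u,S) + \tfrac Q2\, d(MST(V/S))$ equals exactly $\tfrac Q2\big(\flow(S)+\tree(S)\big)$. Hence a $\beta$-approximate $k$ median forest solution $\hat S$ satisfies $\flow(\hat S)+\tree(\hat S) \le \beta\cdot\min_{S:\,|S|=k}\big(\flow(S)+\tree(S)\big)$.

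Next I would relate both sides to $\opt$, the optimal \lrp value, realized by some depot set $S^\star$ with $|S^\star|=k$. On one hand, minimality over all size-$k$ sets gives $\min_{S}\big(\flow(S)+\tree(S)\big) \le \flow(S^\star)+\tree(S^\star)$. On the other hand, Theorem~\ref{th:cvrp}(1) gives $\opt \ge \max\{\flow(S^\star),\,\tree(S^\star)\} \ge \tfrac12\big(\flow(S^\star)+\tree(S^\star)\big)$. Chaining these inequalities yields $\flow(\hat S)+\tree(\hat S) \le 2\beta\cdot\opt$.

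Finally, running the polynomial-time MST-splitting algorithm of Theorem~\ref{th:cvrp}(2) on the depot set $\hat S$ produces an unsplit-delivery CVRP solution of cost at most $2\,\flow(\hat S)+2\,\tree(\hat S) \le 4\beta\cdot\opt$. Since $S^\star$ is feasible for \lrp and $\hat S$ is produced in polynomial time given a polynomial-time $k$ median forest algorithm, this is a genuine $4\beta$-approximation for \lrp, as claimed.

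As for where the difficulty lies: there is essentially no obstacle here — this is the ``fairly straightforward'' reduction advertised in the introduction. The factor $4$ is inherent to this argument and arises from two independent factor-$2$ losses pulling in opposite directions: the lower bound of Theorem~\ref{th:cvrp}(1) is a $\max$ of $\flow$ and $\tree$, so comparing it against their sum costs a factor $2$; and the route-cost bound of Theorem~\ref{th:cvrp}(2) is $2\,\flow+2\,\tree$, costing another factor $2$. The only points requiring care are the bookkeeping around the parameter $\rho = Q/2$ together with the harmless rescaling that turns the $k$ median forest objective into $\flow+\tree$, and the fact that the $k$ median forest routine is invoked in its vertex-weighted form, as flagged just before the lemma statement.
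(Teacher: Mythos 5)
Your proof is correct and matches the paper's argument almost exactly: the paper introduces a parameter $\epsilon$ and then sets $\epsilon = 1/2$, yielding precisely your choice $\rho = Q/2$, the same chain of inequalities through $\opt \ge \max\{\flow(S^\star),\tree(S^\star)\} \ge \tfrac12(\flow(S^\star)+\tree(S^\star))$, and the same application of the MST-splitting bound. One small slip: your opening sentence says $\rho = \tfrac{2}{Q}$, but the correct value (which you use in the actual calculation in the next paragraph) is $\rho = \tfrac{Q}{2}$, since $\sum_u q_u d(u,S) + \tfrac{Q}{2}\,d(MST(V/S)) = \tfrac{Q}{2}(\flow(S)+\tree(S))$.
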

\begin{proof} Let \opt denote the optimal value of the \lrp instance. Using the lower bound in Theorem~\ref{th:cvrp},
$$\opt \ge \min_{S:|S|=k} \, \max\left\{ \flow(S), \, \tree(S)\right\} \ge \min_{S:|S|=k} \, \left[\epsilon\cdot \flow(S) + (1-\epsilon)\cdot \tree(S)\right],$$
where $\epsilon\in [0,1]$ is any value; this will be fixed later. Consider the instance of $k$ median forest on metric
$(V,d)$, vertex weights $\{q_u\}_{u\in V}$ and parameter $\rho=\frac{1-\epsilon}\epsilon \cdot \frac{Q}2$. For any
$S\sse V$ the objective is:
$$\sum_{u\in V} q_u\cdot d(u,S) + \rho\cdot d(MST(V/S)) = \frac{Q}2 \cdot \flow(S) + \rho\cdot \tree(S) =
\frac{Q}{2\epsilon} \cdot \left[\epsilon\cdot \flow(S) + (1-\epsilon)\cdot \tree(S)\right] $$ Thus the optimal value of
the \kmf instance is at most $\frac{Q}{2\epsilon} \cdot \opt$. Let $S_{alg}$ denote the solution found by the
$\beta$-approximation algorithm for \kmf. It follows that $|S_{alg}|=k$ and:
\begin{equation} \label{eq:lrp2kmed}
\epsilon\cdot \flow(S_{alg}) + (1-\epsilon)\cdot \tree(S_{alg}) \le \beta\cdot \opt\end{equation} For the \lrp
instance, we locate the depots at $S_{alg}$. Using Theorem~\ref{th:cvrp}, the cost of the resulting vehicle routing
solution is at most $2\cdot \flow(S_{alg}) + 2\cdot \tree(S_{alg}) = 4\cdot \left[\epsilon\cdot \flow(S_{alg}) +
(1-\epsilon)\cdot \tree(S_{alg}) \right]$ where we set $\epsilon= 1/2$. From Inequality~\eqref{eq:lrp2kmed} it follows
that our algorithm is a $4\beta$-approximation algorithm for \lrp.
%We note that this reduction is for the splittable demands version. For the unsplittable demands version of \lrp,
%similar calculations (using $\epsilon=1/6$) yield a $6\beta$-approximation algorithm. Using~\cite{HKM10} get
%$4\beta$-approximation ratio.
\end{proof}

\medskip

We remark that this reduction already gives us a constant factor {\em bicriteria} approximation algorithm for \lrp as
follows. Let $S_{med}$ denote an approximate solution to $k$-median on metric $(V,d)$ with vertex-weights $\{q_u:u\in
V\}$, which can be obtained by directly using a $k$-median algorithm~\cite{AGKMMP04}. Let $S_{mst}$ denote the optimal
solution to $\min_{S: |S|\le k} \, d(MST(V/S))$, which can be obtained using the greedy MST algorithm. We output
$S_{bi}=S_{med}\bigcup S_{mst}$ as a solution to \lrp, along with the vehicle routes obtained from
Theorem~\ref{th:cvrp} applied to $S_{bi}$. Note that $|S_{bi}|\le 2k$, so we open at most $2k$ depots. Moreover, if
$S^*$ denotes the location of depots in the optimal solution to \lrp then:
\begin{OneLiners}
\item $\flow(S_{med})\le (3+\delta)\cdot \flow(S^*)$ since we used a $(3+\delta)$-approximation algorithm for
$k$-median~\cite{AGKMMP04}.
\item $\tree(S_{mst})\le \tree(S^*)$ since  $S_{mst}$ is an optimal solution to the MST part of the objective.
\end{OneLiners}
%Using $\epsilon=1/5$ as in Lemma~\ref{lem:lrp2kmed} we have:
Clearly $\flow(S_{bi}) \le \flow(S_{med})$ and $\tree(S_{bi}) \le \tree(S_{mst})$, so:
$$\frac12\cdot \flow(S_{bi}) + \frac12\cdot \tree(S_{bi}) \le \frac{3+\delta}2 \cdot \left[
\flow(S^*) + \tree(S^*)\right] \le (3+\delta)\cdot \opt$$ Using Theorem~\ref{th:cvrp} the cost of the CVRP solution
with depots $S_{bi}$ is at most $4(3+\delta)\cdot \opt$. So this gives a $(12+\delta,\, 2)$ bicriteria approximation
algorithm for \lrp, where $\delta>0$ is any fixed constant. We note that this approach combined with algorithms for
facility-location and Steiner tree immediately gives a constant factor approximation for the facility location CVRP
considered in~\cite{HKM10}. The algorithm in that paper~\cite{HKM10} has to do  some more work in order to get a
sharper constant. For \lrp this approach clearly does not give any true approximation ratio, and for this purpose we
give an algorithm for \kmf.

%\section{Local search for $k$ median forest}
%\input{kmf}

\section{Multi-swap local search for \kmf} \label{sec:kmf}
\def\swap{\ensuremath{\mathcal{S}}\xspace}
\def\p{\ensuremath{\mathcal{P}}\xspace}
\def\f{\ensuremath{\mathcal{F}}\xspace}
\def\C{\ensuremath{\mathcal{C}}\xspace}

The input to {\em $k$ median forest} consists of a metric $(V,d)$, vertex-weights $\{q_u\}_{u\in V}$ and bound $k$. The
goal is to find $S\sse V$ with $|S|=k$ minimizing:
$$\Phi(S)=  \sum_{u\in V} q_u\cdot d(u,S) \quad + \quad d\big(\,\MST(V/S)\,\big),$$
where $d(u,S)=\min_{w\in S} d(u,w)$ and $\MST(V/S)$ is a minimum spanning tree in the graph obtained by contracting $S$
to a single vertex. Note that this is slightly more general than the definition in Section~\ref{sec:intro} (which is
the special case when $q_u=1/\rho$ for all $u\in V$).

We analyze the natural $t$-swap local search for this problem, for any constant $t$. Starting at an arbitrary solution
$L$ consisting of $k$ centers do the following until no improvement is possible: if there exists $D\sse L$ and $A\sse
V\setminus L$ with $|D|=|A|\le t$ and $\Phi(L\setminus D \bigcup A) < \Phi(L)$ then $L\gets L\setminus D \bigcup A$.
Clearly each local step can be performed in $n^{O(t)}$ time which is polynomial for fixed $t$. The number of iterations
to reach a local optimum may be super-polynomial; however this can be made polynomial by the standard
method~\cite{AGKMMP04} of performing a  local move only if the cost $\Phi$ reduces by some $1+\frac1{poly(n)}$ factor.
Here we omit this (minor) detail and bound the local optimum under the swaps as defined above.

Let $F\sse V$ denote the local optimum solution (under $t$-swaps) and $F^*\sse V$ the global optimum. Note that
$|F|=|F^*|=k$. Define map $\eta: F^*\rightarrow F$ as $\eta(w)=\arg\min_{v\in F} d(w,v)$ for all $w\in F^*$. For any
$S\sse V$, let $\med(S) := \sum_{u\in V} q_u\cdot d(u,S)$, and $\tree(S) = d(MST(V/S))$ be the length of the minimum
spanning tree in the metric obtained by contracting $S$; so $\Phi(S) = \med(S)+\tree(S)$. For any $D\sse F$ and $A\sse
V\setminus F$ with $|D|=|A|\le t$ we refer to the swap $F-D+A$ as a ``$(D,A)$ swap''. We use the following swap
construction from~\cite{GT08} for the $k$-median problem.

\begin{theorem}[\cite{GT08}] \label{th:GT-swap}
For any $F,F^*\sse V$ with $|F|=|F^*|=k$, there are partitions $\{F_i\}_{i=1}^\ell$ of $F$ and $\{F^*_i\}_{i=1}^\ell$
of $F^*$ such that $|F_i|=|F^*_i|$ $\forall i\in[\ell]$; and there is a unique $c_i\in F_i$ (for each $i\in[\ell]$)
with $\eta(w)=c_i$ for all $w\in F^*_i$ and $\eta^{-1}(v)=\emptyset$ for all $v\in F_i\setminus \{c_i\}$.  Define set
$\swap$ of $t$-swaps with multipliers $\{\alpha(s) : s\in \swap\}$ as:
\begin{itemize}
 \item For any $i\in[\ell]$, if $|F_i|\le t$ then swap $(F_i, F^*_i)\in \swap$ with $\alpha(F_i,F^*_i)=1$.
 \item For any $i\in[\ell]$, if $|F_i|> t$ then for each $a\in F^*_i$ and $b\in F_i\setminus \{c_i\}$ swap $(b,a)\in
 \swap$ with $\alpha(b,a)=\frac1{|F_i|-1}$.
\end{itemize}
Then we have:
\begin{itemize}
 \item  $\sum_{(D,A)\in \swap} \alpha(D,A)\cdot \left( \med(F-D+A)-\med(F)\right) \le (3+2/t)\cdot \med(F^*) -
\med(F)$.
 \item For each $w\in F^*$, the extent to which $w$ is added $\sum_{(D,A)\in \swap: w\in A} \alpha(D,A)=1$.
\item For each $v\in F$, the extent to which $v$ is dropped $\sum_{(D,A)\in \swap: v\in D} \alpha(D,A)\le 1+\frac1t$.
\end{itemize}
\end{theorem}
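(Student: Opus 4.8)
The plan is to establish Theorem~\ref{th:GT-swap} in two steps: first exhibit the partition together with the swap set $\swap$ and its multipliers, then verify the three bulleted properties. For the partition, call a center $v\in F$ \emph{captured} if $\eta^{-1}(v)\ne\emptyset$; let $c_1,\dots,c_\ell$ be the captured centers and put $F^*_i:=\eta^{-1}(c_i)$ for $i\in[\ell]$. Since $\eta$ is a function from $F^*$ into $F$, the sets $\{F^*_i\}_{i=1}^\ell$ partition $F^*$, so $\sum_i|F^*_i|=k$, and there are exactly $k-\ell$ uncaptured centers, which is also the value of $\sum_i(|F^*_i|-1)$. Hence I can split the uncaptured centers arbitrarily into sets $B_1,\dots,B_\ell$ with $|B_i|=|F^*_i|-1$ and set $F_i:=\{c_i\}\cup B_i$; then $\{F_i\}_{i=1}^\ell$ partitions $F$, $|F_i|=|F^*_i|$, $\eta(w)=c_i$ for every $w\in F^*_i$, and $\eta^{-1}(v)=\emptyset$ for every $v\in F_i\setminus\{c_i\}$. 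Finally define $\swap$ and the multipliers $\alpha$ exactly as in the statement of the theorem.

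The second and third bullets are then a direct bookkeeping check. A center $w\in F^*$ lies in exactly one block $F^*_i$: if $|F_i|\le t$ it is added only by the swap $(F_i,F^*_i)$, of multiplier $1$; if $|F_i|>t$ it is added by the $|F_i|-1$ swaps $(b,w)$ with $b\in F_i\setminus\{c_i\}$, each of multiplier $\tfrac1{|F_i|-1}$; in both cases the total is $1$. A center $v\in F$ lies in exactly one block $F_i$: if $|F_i|\le t$ it is dropped only by $(F_i,F^*_i)$, total $1$; if $|F_i|>t$ and $v=c_i$ it is never dropped; and if $|F_i|>t$ and $v\ne c_i$ it is dropped by the $|F^*_i|=|F_i|$ swaps $(v,a)$, $a\in F^*_i$, for a total of $\tfrac{|F_i|}{|F_i|-1}=1+\tfrac1{|F_i|-1}\le 1+\tfrac1t$ (using $|F_i|>t$).

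The substantial part is the first bullet. I would bound $\med(F-D+A)$ for a swap $s=(D,A)\in\swap$ by an explicit reassignment $\phi_s$ of the clients: writing $\sigma(u)$ and $o(u)$ for the nearest centers of $u$ in $F$ and $F^*$ respectively, and $S_u:=d(u,F)$, $O_u:=d(u,F^*)$, set $\phi_s(u):=o(u)$ if $o(u)\in A$, else $\phi_s(u):=\eta(o(u))$ if $\sigma(u)\in D$, else $\phi_s(u):=\sigma(u)$. The key point --- and the only place the block/singleton shape of $\swap$ is really used --- is that $\phi_s(u)\in F-D+A$ always: the one nontrivial case is $\sigma(u)\in D$ and $o(u)\notin A$, where $\eta(o(u))=c_{i'}$ for the block $i'$ containing $o(u)$, and $c_{i'}\notin D$ because either $D=F_i$ with $i'\ne i$ (disjoint blocks) or $D=\{b\}$ with $b\in F_i\setminus\{c_i\}$, which is not a block-center. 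By the triangle inequality $d(u,\eta(o(u)))\le d(u,o(u))+d(o(u),\sigma(u))\le 2O_u+S_u$, so the change $d(u,\phi_s(u))-S_u$ is $O_u-S_u$ in the first case, at most $2O_u$ in the second, and $0$ in the third; hence $\med(F-D+A)-\med(F)\le\sum_u q_u\,(d(u,\phi_s(u))-S_u)$. Aggregating over $s\in\swap$ with the multipliers: for fixed $u$, the swaps with $o(u)\in A$ carry total multiplier exactly $1$ and contribute $O_u-S_u$ in total, while the remaining swaps with $\sigma(u)\in D$ carry total multiplier at most $1+\tfrac1t$ and each contribute at most $2O_u$, and all other swaps contribute $0$. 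So $u$'s total contribution is at most $(O_u-S_u)+(2+\tfrac2t)O_u=(3+\tfrac2t)O_u-S_u$, and summing $q_u$-weighted over $u$ gives $(3+\tfrac2t)\med(F^*)-\med(F)$.

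The step I expect to be the main obstacle is the feasibility of $\phi_s$, i.e.\ guaranteeing that $\eta(o(u))$ is never the dropped center: this is exactly what forces the structure of $\swap$. One cannot afford to drop a whole large block in a single swap (quite apart from the $t$-swap limit, this would also break the drop-extent bound), so large blocks are broken into singleton swaps, and $c_i$ is deliberately excluded from those swaps so that the added centers still sum to extent $1$. Once $\swap$ has this shape, the counting behind the partition, the triangle-inequality estimate, and the final summation are all routine.
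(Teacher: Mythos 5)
Your proof is correct and reproduces essentially verbatim the Gupta–Tangwongsan argument that the paper cites for this theorem: the partition into captured/uncaptured centers, the block-vs.-singleton swap set, the reassignment $\phi_s$ sending $u$ to $o(u)$ or $\eta(o(u))$ with the feasibility observation that $\eta(o(u))$ is a block-center and hence never dropped, and the $(3+2/t)$ aggregation via the add-extent-$=1$ and drop-extent-$\le 1+1/t$ bounds. The paper itself gives no proof (it is a black-box citation to~\cite{GT08}), so there is nothing in-text to contrast with; all the bookkeeping in your verification (in particular $|F_i|>t\Rightarrow|F_i|-1\ge t$, and the disjointness of the two client groups) is sound.
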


We use the same set $\swap$ of swaps for the \kmf problem and will show the following:
\begin{equation}\label{eq:tree-swap}
\sum_{(D,A)\in \swap} \alpha(D,A)\cdot \left( \tree(F-D+A)-\tree(F)\right) \le (3+2/t)\cdot \tree(F^*) -
\tree(F)\end{equation}

Combined with the similar inequality in Theorem~\ref{th:GT-swap} (for $\med$) and using local optimality of $F$, we
would obtain the main result of this section:
\begin{theorem} The $t$-swap local search algorithm for \kmf is a $\left(3+\frac2t\right)$-approximation.\end{theorem}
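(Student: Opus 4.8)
The plan is to establish the MST inequality~\eqref{eq:tree-swap} and then combine it with the median inequality from Theorem~\ref{th:GT-swap} to conclude. Given~\eqref{eq:tree-swap}, the final theorem is immediate: for every swap $(D,A)\in\swap$ local optimality of $F$ gives $\med(F-D+A)+\tree(F-D+A)\ge \med(F)+\tree(F)$, so adding the two inequalities (each weighted by $\alpha(D,A)\ge 0$) yields
$$0 \;\le\; \sum_{(D,A)\in\swap}\alpha(D,A)\Big(\Phi(F-D+A)-\Phi(F)\Big) \;\le\; \Big(3+\tfrac2t\Big)\big(\med(F^*)+\tree(F^*)\big) - \big(\med(F)+\tree(F)\big),$$
i.e. $\Phi(F)\le (3+\tfrac2t)\,\Phi(F^*)$. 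So the whole game is proving~\eqref{eq:tree-swap}, and the argument I would carry out is as follows.

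First I would fix the partitions $\{F_i\},\{F^*_i\}$ and the centers $c_i$ from Theorem~\ref{th:GT-swap}, and work with a fixed optimal spanning tree $T^*$ of $V/F^*$ and a fixed $T$ of $V/F$. The goal for each swap $(D,A)\in\swap$ is to exhibit a \emph{specific} spanning tree of $V/(F-D+A)$ whose $d$-length upper-bounds $\tree(F-D+A)$, built by surgery on $T$: we must add back edges to reconnect the components of $F$ that were ``held together'' through the dropped vertices $D$, while we get to delete the edges of $T$ incident to the newly added vertices $A$ (those vertices are now contracted, so their incident tree edges become free). I would reconnect using a bounded-length detour: to bridge a gap created by dropping $v\in D$, route through the nearest surviving center, charging the new edge to edges of $T^*$ plus the $\eta$-distances $d(w,\eta(w))$, exactly the kind of quantities that also appear in the $k$-median charging of~\cite{GT08}. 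The key structural facts I expect to use are the standard spanning-tree exchange property (swapping one tree edge for another across a cut) together with the special structure of the swaps in $\swap$: for the "big" classes $|F_i|>t$ only non-center vertices $b\in F_i\setminus\{c_i\}$ are ever dropped, and these have empty $\eta$-preimage, so dropping $b$ disconnects $T$ at $b$ but reconnecting can be done by a short path through $c_i$; for "small" classes the whole $F_i$ is swapped for $F^*_i$ at once, and reconnection uses $T^*$ restricted to the corresponding vertices.

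The main obstacle — and, as the authors hint, the technically interesting point — is controlling the \emph{total} reconnection cost summed over all swaps against $(3+\tfrac2t)\tree(F^*)-\tree(F)$, rather than merely bounding each swap individually (which would lose too much). This requires that each edge of $T^*$ is used as a ``reconnection resource'' only $O(1)$ times across the whole family $\swap$, and each edge of $T$ is deleted roughly once; the multipliers $\alpha(\cdot)$ and the two counting properties in Theorem~\ref{th:GT-swap} (every $w\in F^*$ added to extent exactly $1$, every $v\in F$ dropped to extent at most $1+\tfrac1t$) are precisely what make this amortization work, which is why the same $\swap$ from~\cite{GT08} suffices. Concretely I would assign to each swap a sub-forest of $T^*$ (the edges of $T^*$ whose endpoints lie in the $F^*_i$ being added), show these sub-forests have bounded overlap, use triangle inequality to convert $T^*$-edges plus $\eta$-hops into actual reconnection edges in the contracted metric, and finally sum up — the $3$ coming from a three-edge detour (down an $\eta$-edge, along a $T^*$-edge, back up an $\eta$-edge) and the $\tfrac2t$ from the fractional multipliers on the big classes, mirroring the median bound. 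I would also need the easy observations that deleting $T$-edges incident to $A$ only helps (so dropping them from the bound is safe) and that $|A|=|D|$ keeps the center count at $k$ throughout.
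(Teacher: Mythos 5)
Your high-level plan---prove inequality~\eqref{eq:tree-swap} by surgery on $M=\MST(V/F)$, charge the reconnection cost to $M^*$-edges and $\eta$-hops, amortize via the extent bounds of Theorem~\ref{th:GT-swap}, and finally combine with the $\med$ inequality using local optimality---does match what the paper does, and your final combination step is exactly right. But two of the pieces you gesture at are not in fact what makes the argument go through, and they are the technically decisive ones.

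First, the single-edge spanning-tree exchange property (``swapping one tree edge for another across a cut'') is too weak. The paper needs the strong basis-exchange theorem for matroids (Theorem~\ref{th:mat-exch}; Schrijver, Eq.~42.15): for \emph{any} partition of one spanning tree $T_1$ into parts, there is a matching partition of $T_2$ into equal-sized parts such that swapping in any one part of $T_1$ and deleting its partner in $T_2$ preserves spanning. This is applied in the rooted auxiliary graph $H$ to $M^*\cup\{(r,v):v\in F^*\}$, with singletons on true edges and a refinement of $\{F^*_i\}$ on root edges. The payoff is a canonical deletable set $E_A$ of $M$-edges for each swap $(D,A)$, together with the crucial structural fact (Claim~\ref{cl:kmf-Ea}) that $\{E_A\}_{(D,A)\in\swap}$ \emph{partitions} a subset $M''$ of the true edges of $M$, while the complement $M'$ is handled by a separate MST-optimality argument (Claim~\ref{cl:kmf-m'}). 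Without this you cannot cleanly identify which $M$-edges you may delete per swap nor show their total credit sums to $-d(E\cap M)$; ``delete edges of $T$ incident to $A$'' would not give that accounting, and ``sub-forests of $T^*$ indexed by the added vertices with bounded overlap'' is also not the right decomposition (the paper indexes the $T^*$-charge by the \emph{dropped} vertices $f\in D$, via out-edge sets $T_f$).

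Second, attributing the $3$ to a symmetric three-edge detour (down $\eta$, along $T^*$, up $\eta$) is the $k$-median mechanism, not the MST one. The per-swap bound proved is $\tree(F-D+A)-\tree(F)\le 2\sum_{f\in D}d(T_f)-d(E_A)$: one copy of $\sum_{f\in D}d(T_f)$ pays to adjoin those out-edges to the surviving forest $S_A$, and the second copy dominates the reconnecting edges $(f^*,\eta(f^*))$ via paths in $\bigcup_{f\in D}T_f$. Summing with multipliers gives $2(1+\tfrac1t)\,d(E\cap M^*)-d(M'')$; adding the separate bound $d(M')\le d(E\cap M^*)$ gives $(3+\tfrac2t)\,d(E\cap M^*)-d(E\cap M)$. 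So the constant is $2(1+\tfrac1t)+1$, not $1+1+1$ plus a separate $\tfrac2t$ correction. These gaps are where the paper's actual work lies, so the plan as written would not carry through.
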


It remains to prove~\eqref{eq:tree-swap}, which we do in the rest of the section. Consider a graph $H$ which is the
complete graph on vertices $V\bigcup\{r\}$ (for a new vertex $r$). If $E={V\choose 2}$ denotes the edges in the metric,
$H$ has edges $E \bigcup \{(r,v) :\, v\in V\}$; the edges $\{(r,v) :\, v\in V\}$ are called {\em root-edges} and edges
$E$ are {\em true-edges}. Let $M$ denote the {\em spanning tree} of $H$ consisting of edges $MST(V/F) \bigcup \{(r,v) :
v\in F\}$; similarly $M^*$ is the spanning tree $MST(V/F^*) \bigcup \{(r,v) : v\in F^*\}$. For ease of notation, for
any subset $S\sse V$, when it is clear from context we will use $S$ to also denote the set $\{(r,v) : v\in S\}$ of
root-edges. We start with the following exchange property (which holds more generally for any matroid), see
Equation~(42.15) in Schrijver~\cite{Schr-book}.
\begin{theorem}[\cite{Schr-book}]\label{th:mat-exch}
Given two spanning trees $T_1$ and $T_2$ in a graph $H$ and a partition $\{T_1(i)\}_{i=1}^p$ of the edges of $T_1$,
there exists a partition $\{T_2(i)\}_{i=1}^p$ of edges of $T_2$ such that $(T_2\setminus T_2(i)) \bigcup T_1(i)$ is a
spanning tree in $H$ for each $i\in[p]$. (This also implies $|T_2(i)|=|T_1(i)|$ for all $i\in[p]$).
\end{theorem}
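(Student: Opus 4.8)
The final statement is the simultaneous (multiple) basis–exchange property of matroids, specialized to the graphic matroid of $H$: spanning trees of $H$ are exactly the bases of that matroid, so I would forget about graphs entirely and prove the statement for an arbitrary matroid $M$ of rank $r$ on a ground set $E$, with $T_1,T_2$ two bases. The plan has two ingredients: an induction on $p$ that reduces everything to the case $p=2$, and a self-contained proof of the $p=2$ case via the matroid union theorem. Throughout I use $r_M$ for the rank function of $M$, and the (classical) facts that contracting an independent set $I$ gives $r_{M/I}(A)=r_M(A\cup I)-|I|$.

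For the induction, $p=1$ is trivial ($T_2(1):=T_2$). For $p\ge 2$ I would first apply the $p=2$ case to the coarsened $2$-part partition $\{\,T_1(1),\ T_1(2)\cup\cdots\cup T_1(p)\,\}$ of $T_1$. This yields a partition $T_2=Y\sqcup Y'$ with $|Y|=|T_1(1)|$ such that both $(T_2\setminus Y)\cup T_1(1)$ and $Y\cup\big(T_1(2)\cup\cdots\cup T_1(p)\big)$ are bases of $M$. Set $T_2(1):=Y$; this settles index $1$. Now I would pass to the minor $M/Y$: there $Y'=T_2\setminus Y$ is a basis, and $T_1(2)\cup\cdots\cup T_1(p)$ is a basis too (it is independent in $M/Y$ by the second property, and its size is $r-|T_1(1)|=r-|Y|=\mathrm{rank}(M/Y)$). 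Applying the induction hypothesis in $M/Y$ to the $(p{-}1)$-part partition $\{T_1(2),\dots,T_1(p)\}$ gives a partition $Y'=T_2(2)\sqcup\cdots\sqcup T_2(p)$ with $(Y'\setminus T_2(i))\cup T_1(i)$ a basis of $M/Y$ for each $i\ge2$; adding $Y$ back, $\big((Y'\setminus T_2(i))\cup Y\big)\cup T_1(i)=(T_2\setminus T_2(i))\cup T_1(i)$ is a basis of $M$. Since $\{Y,T_2(2),\dots,T_2(p)\}$ partitions $T_2$, and the cardinality equalities propagate, this completes the inductive step.

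For the $p=2$ case — the crux — I would write $T_1=X_1\sqcup X_2$ and $B:=T_2$, and, after the standard reduction that disposes of elements common to $T_1$ and $T_2$, assume $T_1\cap B=\emptyset$. I would then form two matroids on the ground set $B$, namely $N_i:=(M/X_i)|_{B}$, so that $Z\subseteq B$ is independent in $N_i$ iff $Z\cup X_i$ is independent in $M$; since $B$ spans $M$, $N_i$ has rank $r-|X_i|$. The goal becomes: $B$ is a disjoint union of a basis of $N_1$ and a basis of $N_2$ — for then, taking $T_2(2)$ to be the $N_1$-basis and $T_2(1)$ the $N_2$-basis, the sets $(B\setminus T_2(1))\cup X_1$ and $(B\setminus T_2(2))\cup X_2$ are independent of size $r$, hence bases, and $|T_2(i)|=r-|X_{3-i}|=|X_i|$. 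By the matroid union theorem, $B$ contains disjoint $Z_1$ independent in $N_1$ and $Z_2$ independent in $N_2$ with $|Z_1|+|Z_2|=\min_{F\subseteq B}\big(|B\setminus F|+r_{N_1}(F)+r_{N_2}(F)\big)$, so it suffices to show this minimum equals $|B|=r$. Taking $F=\emptyset$ gives the upper bound $r$; for the lower bound I need $r_{N_1}(F)+r_{N_2}(F)\ge|F|$ for every $F\subseteq B$, and here $r_{N_i}(F)=r_M(F\cup X_i)-|X_i|$ together with submodularity of $r_M$ gives $r_M(F\cup X_1)+r_M(F\cup X_2)\ge r_M(F\cup X_1\cup X_2)+r_M\big(F\cup(X_1\cap X_2)\big)=r_M(F\cup T_1)+r_M(F)=r+|F|$, using $X_1\cap X_2=\emptyset$, that $F\subseteq T_2$ is independent, and that $T_1$ spans. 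Rearranging yields the needed inequality; equality in the union bound then forces $Z_1\sqcup Z_2=B$ with each $|Z_i|=\mathrm{rank}(N_i)$, so $Z_1,Z_2$ are bases of $N_1,N_2$ as required.

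The main obstacle is precisely this $p=2$ step; the induction around it is pure bookkeeping. A cleaner-to-cite alternative for $p=2$ is to invoke the classical symmetric-exchange theorem of Brylawski, Greene, and Woodall directly, or to prove it via the exchange graph of $T_1$ and $T_2$ — but the matroid-union derivation above is short and uses only tools already present in the matroid machinery the paper draws on. I would also be careful to spell out the "standard reduction" to $T_1\cap T_2=\emptyset$ (an element $e\in T_1\cap T_2$ should be forced into the block $T_2(i)$ whose index matches the block $T_1(i)$ containing $e$, which is handled by contracting $e$ and invoking induction on $|E|$), since the cardinality assertion $|T_2(i)|=|T_1(i)|$ in the theorem must survive that reduction.
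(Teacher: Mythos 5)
The paper does not prove this statement at all: it is quoted as a known exchange property of matroid bases, with a pointer to equation (42.15) in Schrijver's book, and the theorem is used as a black box. So there is no ``paper proof'' to match; your proposal supplies a genuine proof, and it is correct. Your two-step plan --- reduce to arbitrary matroids and to the case $p=2$ by induction (coarsening to $\{T_1(1),\,T_1\setminus T_1(1)\}$, then recursing in the contraction $M/Y$), and settle $p=2$ via the matroid union theorem applied to $N_i=(M/X_i)|_B$ --- is sound. The rank computation $r_{N_1}(F)+r_{N_2}(F)\ge |F|$ via submodularity, using $X_1\cap X_2=\emptyset$, $r_M(F)=|F|$ for $F\subseteq T_2$, and $r_M(F\cup T_1)=r$, is exactly what makes the min in the union formula equal $|B|$, and the equality case does force $B=Z_1\sqcup Z_2$ with each $Z_i$ a basis of $N_i$. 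This is essentially the standard textbook derivation (it is close to how Schrijver himself obtains the multi-part exchange statement from the Nash-Williams/Edmonds union theorem), so you are not taking a different route from the cited source, just making it explicit. Two small things you correctly flag but should actually write out if this were to appear in full: the reduction to $T_1\cap T_2=\emptyset$ (needed both for $(M/X_i)|_B$ to be loop-free on all of $B$ and for the size count $|Z_1\cup X_1|=r$), and the observation that $|Y|=|T_1(1)|$ --- which you need so that $T_1\setminus T_1(1)$ has full rank in $M/Y$ --- follows from the $p=2$ conclusion by the same counting argument that justifies the theorem's parenthetical claim.
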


\begin{figure}
\begin{center}
\includegraphics[scale=0.75]{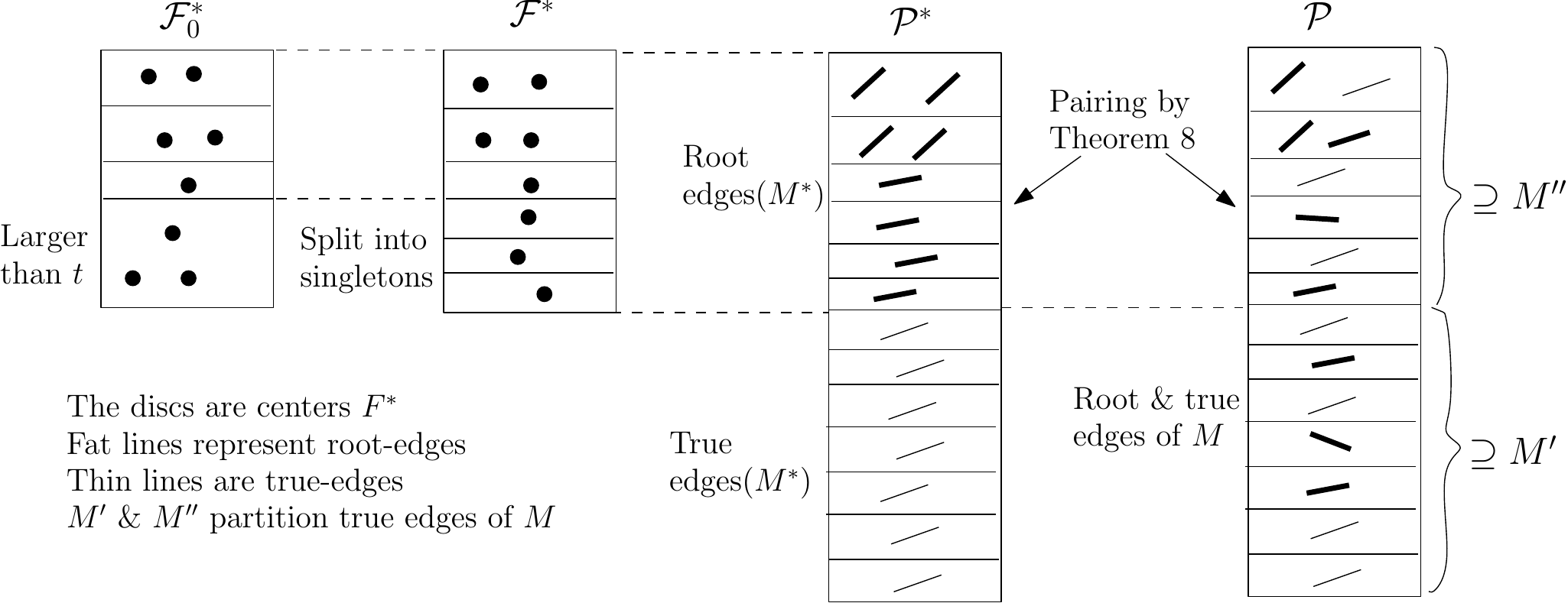}
\caption{The partitions used in local search proof (eg. has $k=8$ and $t=2$).\label{fig:partn}}
\end{center}
\end{figure}

We will apply Theorem~\ref{th:mat-exch} on trees $M^*$ and $M$. Throughout, $M^*$ and $M$ represent the corresponding
edge-sets. Recall the partition $\f^*_0 := \{F^*_i\}_{i=1}^\ell$ of $F^*$ from Theorem~\ref{th:GT-swap}; we refine
$\f^*_0$ by splitting parts of size larger than $t$ into singletons, and let $\f^*$ denote the resulting partition (see
Figure~\ref{fig:partn}). The reason behind splitting the large parts of $\{F^*_i\}_{i=1}^\ell$ is to ensure the
following property (recall swaps \swap from Theorem~\ref{th:GT-swap}).
\begin{claim}\label{cl:fstar-partn}
For each swap $(D,A)\in\swap$, $A\sse F^*$ appears as a part in $\f^*$. Moreover, for each part $A'$ in $\f^*$ there is
some swap $(D',A')\in\swap$.
\end{claim}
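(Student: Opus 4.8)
The plan is to prove Claim~\ref{cl:fstar-partn} by a direct case analysis on the two types of swaps in \swap, matched against the two ways a part of $\f^*$ can arise. Recall that $\f^*$ is obtained from $\f^*_0=\{F^*_i\}_{i=1}^\ell$ by leaving every part $F^*_i$ with $|F^*_i|\le t$ untouched and replacing every part $F^*_i$ with $|F^*_i|>t$ by the $|F^*_i|$ singletons it contains. Since $|F^*_i|=|F_i|$ for all $i$ by Theorem~\ref{th:GT-swap}, this split happens exactly when $|F_i|>t$.

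First I would check the forward direction. Take a swap $(D,A)\in\swap$. If it is of the first type, then $A=F^*_i$ for some $i$ with $|F_i|\le t$, hence $|F^*_i|\le t$, so $F^*_i$ is a part of $\f^*$ and we are done. If it is of the second type, then $A=\{a\}$ for some $a\in F^*_i$ with $|F_i|>t$; since $|F^*_i|>t$ the part $F^*_i$ is split into singletons in $\f^*$, so $\{a\}$ is a part of $\f^*$. This establishes that every add-side $A$ of a swap in \swap is a part of $\f^*$.

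Next I would check the converse. Let $A'$ be a part of $\f^*$. Either $A'=F^*_i$ for some $i$ with $|F^*_i|\le t$, in which case the first-type swap $(F_i,F^*_i)\in\swap$ witnesses the claim; or $A'=\{a\}$ is a singleton obtained by splitting some $F^*_i$ with $|F^*_i|>t$, in which case any $b\in F_i\setminus\{c_i\}$ yields the second-type swap $(b,a)\in\swap$. The only point requiring a moment's care here is that $F_i\setminus\{c_i\}\neq\emptyset$, which holds because $|F_i|=|F^*_i|>t\ge 1$.

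I do not expect a real obstacle: the content of the claim is entirely bookkeeping, namely that the refinement $\f^*$ was defined precisely so that the add-sides of the swaps in \swap coincide with the parts of $\f^*$. The verification of non-emptiness of $F_i\setminus\{c_i\}$ in the large-part case is the single spot where one invokes $t\ge 1$, and it is immediate.
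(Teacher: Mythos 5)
Your proof is correct and is exactly the intended ``by construction'' verification: the paper states Claim~\ref{cl:fstar-partn} without a written proof because the refinement $\f^*$ is defined precisely so that its parts coincide with the add-sides of the swaps in $\swap$, and your case analysis (small parts kept intact versus large parts split into singletons, using $|F_i|=|F^*_i|$ to transfer the size threshold) spells this out faithfully, including the small check that $F_i\setminus\{c_i\}\neq\emptyset$ in the large-part case. There is no gap and no divergence from the paper's approach.
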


Consider the partition $\p^*$ of $M^*$ with parts $\f^*\bigcup \{e\}_{e\in M^*\setminus F^*}$, i.e. each true edge lies
in a singleton part  and the root edges form the partition $\f^*$ defined above. Let $\p$    %=\{M_i\}_{i=1}^{\ell+n-k}$
denote the partition of $M$ obtained by applying Theorem~\ref{th:mat-exch} with partition $\p^*$ of $M^*$; note also
that there is a {\em pairing} between parts of $\p$ and $\p^*$. Let $M'\sse M\cap E$ denote the true edges of $M$ that
are paired with true edges of $M^*$; and $M''=(M\cap E) \setminus M'$ are the remaining true edges of $M$ (see also
Figure~\ref{fig:partn}). We will bound the cost of $M'$ and $M''$ separately.

\begin{claim}\label{cl:kmf-m'}
$\sum_{e\in M'} d_e \le \sum_{h\in E\cap M^*} d_h$.
\end{claim}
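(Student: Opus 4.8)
The plan is to use the pairing between parts of $\p$ and $\p^*$ supplied by Theorem~\ref{th:mat-exch}. Recall that in $\p^*$ every true edge $h\in M^*\cap E$ sits in its own singleton part $\{h\}$, and by Theorem~\ref{th:mat-exch} the paired part in $\p$, call it $P_h\sse M$, satisfies $|P_h|=|\{h\}|=1$; so $P_h=\{e_h\}$ is also a single edge. By definition $M'$ consists exactly of those true edges of $M$ that are paired with true edges of $M^*$, i.e. $M' = \{e_h : h\in M^*\cap E,\ e_h\in E\}$ (some singleton parts of $\p$ paired with true edges of $M^*$ could a priori be root-edges, but those do not contribute to $M'$). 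Hence $|M'|\le |M^*\cap E|$ and it suffices to show that for each true edge $h$ of $M^*$, the paired edge $e_h$ (when it is a true edge) has cost $d_{e_h}\le d_h$. Summing over $h\in M^*\cap E$ then gives the claim.

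The key step is the per-edge cost comparison, and this is exactly where the matroid exchange property does the work. Theorem~\ref{th:mat-exch} guarantees that $(M\setminus P_h)\cup\{h\}$ is a spanning tree of $H$; equivalently $(M\setminus\{e_h\})\cup\{h\}$ is a spanning tree. Since $M$ is itself a spanning tree, $e_h$ lies on the unique cycle created by adding $h$ to $M\setminus\{e_h\}$, so $e_h$ and $h$ form a circuit, meaning $e_h$ lies on the fundamental cycle of $h$ with respect to $M$. Now I invoke minimality: $M = MST(V/F)\cup F$ is a minimum spanning tree of the ``rooted'' graph $H$ in which all root-edges $\{(r,v):v\in V\}$ are assigned cost $0$ and true edges keep their metric cost $d$ — indeed contracting $F$ is the same as forcing the root-edges incident to $F$ into the tree and then taking a min spanning tree on the rest, which is precisely a min-cost spanning tree of $H$ under this weighting once we observe an optimal such tree uses exactly the root-edges of some $k$-set and we may take that set to be $F$. (More carefully: among all spanning trees of $H$ under the $0/d$ weighting that use exactly $k$ root-edges, $M$ is minimum; and since $h$ is a true edge, both $M$ and $(M\setminus\{e_h\})\cup\{h\}$ — the latter uses the same root-edges because $e_h$ may be taken to be a true edge — have the same number of root-edges, so the MST optimality applies within this class.) By the standard MST cycle/exchange optimality, every edge $e_h$ on the fundamental cycle of $h$ satisfies $d_{e_h}\le d_h$.

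I expect the main obstacle to be the bookkeeping around root-edges versus true edges: making precise that $M$ is a minimum spanning tree in the right weighted graph $H$ and that the exchange $(M\setminus\{e_h\})\cup\{h\}$ stays within the class of spanning trees with exactly $k$ root-edges, so that comparing $d_{e_h}$ to $d_h$ is legitimate. One clean way to handle this is: since $h\in M^*\cap E$ is a true edge and $M\setminus P_h \cup\{h\}$ is a spanning tree with the same number of edges as $M$, the part $P_h$ must be a single edge that, together with $h$, forms a circuit in the graphic matroid of $H$; if $P_h$ were a root-edge $(r,v)$ then $(r,v)$ and the true edge $h$ form a circuit, which forces $h$ to also be a root-edge — contradiction — so whenever $h$ is a true edge its partner $e_h$ is a true edge, and it lies on the fundamental cycle of $h$ in $M$ consisting of true edges only; minimality of $MST(V/F)$ then gives $d_{e_h}\le d_h$ directly. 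Summing over all $h\in E\cap M^*$ yields $\sum_{e\in M'} d_e=\sum_{h\in E\cap M^*} d_{e_h}\le \sum_{h\in E\cap M^*} d_h$, which is the claim.
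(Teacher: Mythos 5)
Your core argument is the same as the paper's: use the singleton-to-singleton pairing from Theorem~\ref{th:mat-exch}, observe that for $e\in M'$ paired with $h\in E\cap M^*$ the exchange $M-e+h$ is a spanning tree of $H$ whose root-edges are exactly $F$ (hence its true edges span $V/F$), and invoke minimality of $\MST(V/F)$ to get $d_e\le d_h$, then sum. That matches the paper; the meandering about the $0/d$-weighted MST of $H$ is a roundabout way of saying exactly this, and it does land in the right place once you note that for $e\in M'$ both $e$ and $h$ are by definition true edges so the root-edge set is unchanged.

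However, the ``clean way to handle this'' in your final paragraph contains a false claim: it is \emph{not} true that whenever $h\in E\cap M^*$ is a true edge, its partner $e_h$ must be a true edge. Your argument says that if $e_h$ were a root-edge $(r,v)$ then $(r,v)$ and $h$ ``form a circuit,'' forcing $h$ to be a root-edge. But ``form a circuit'' here only means that $e_h$ lies on the fundamental cycle of $h$ in $M$, not that $\{e_h,h\}$ is a two-element circuit; and the fundamental cycle of a true edge $h=(a,b)$ in $M$ contains root-edges precisely when $a,b$ lie in different components of $\MST(V/F)$ (the path in $M$ from $a$ to $b$ then passes through $r$). So a true edge of $M^*$ can perfectly well be paired with a root-edge of $M$. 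Consequently the equality $\sum_{e\in M'} d_e=\sum_{h\in E\cap M^*} d_{e_h}$ you write at the end need not hold. Fortunately the error is harmless: the correct observation you made earlier --- that $M'$ consists of a subset of the $e_h$'s, namely those that happen to be true edges, so $\sum_{e\in M'} d_e \le \sum_{h\in E\cap M^*} d_h$ by per-pair comparison over that subset --- already gives the claim, and is exactly how the paper argues it.
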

\begin{proof} Fix any $e\in M'$. By the definition of $M'$ it follows that there is an $h\in E\cap M^*$
such that part $\{h\}$ in $\p^*$ is paired with part $\{e\}$ in $\p$. In particular,  $M-e+h$ is a spanning tree in
$H$. Note that the root edges in $M-e+h$ are exactly $F$, and so $M-e+h$ is a spanning tree in the original metric
graph $(V,E)$ when we contract vertices $F$. Since $M=MST(V/F)$ is the minimum such tree, we have $d(M)-d_e+d_h\ge
d(M)$ or $d_e\le d_h$. Summing over all $e\in M'$ and observing that each edge $h\in E\cap M^*$ can be paired with at
most one $e\in M'$, we obtain the claim.\end{proof}

Consider the connected components (in fact a forest) induced by true-edges of $M$: for each $f\in F$ let $C_f\sse V$
denote the vertices connected to $f$. Note that $\{C_f:f\in F\}$ partitions $V$.

Now consider the forest induced by true edges of $M^*$ (i.e. $E\cap M^*$) and direct each edge towards an $F^*$-vertex
(note that each tree in this forest contains exactly one $F^*$-vertex). Observe that each vertex $v\in V\setminus F^*$
has exactly one out-edge $\sigma_v$, and $F^*$-vertices have none.

For each $f\in F$, define $T_f := \{\sigma_v : v\in C_f\}$ the set of out-edges from $C_f$.
\begin{claim}\label{cl:kmf-m*} $\sum_{f\in F} d(T_f) = d(E\cap M^*)$.
\end{claim}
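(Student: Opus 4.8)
The plan is to show that the family $\{T_f : f\in F\}$ forms a partition of the true edge-set $E\cap M^*$ of $M^*$, after which the claimed equality of total lengths is immediate by summing $d_e$ over the parts.

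First I would observe that orienting each true edge of $M^*$ toward the unique $F^*$-vertex in its tree (as in the paragraph preceding the claim) makes $v\mapsto \sigma_v$ a well-defined function on $V\setminus F^*$, and in fact a \emph{bijection} from $V\setminus F^*$ onto $E\cap M^*$. Indeed, under this orientation every true edge of $M^*$ has a unique tail, so distinct vertices give distinct out-edges; and every true edge of $M^*$ is the out-edge of its tail vertex, so the map is onto. (Equivalently: the forest $E\cap M^*$ on vertex set $V$ has exactly $k$ components, one per $F^*$-vertex, hence $|V|-k = |V\setminus F^*|$ edges, matching the domain size.)

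Next I would invoke the fact, stated just before the claim, that $\{C_f : f\in F\}$ partitions $V$. Each $v\in V\setminus F^*$ therefore lies in exactly one block $C_f$ and contributes its unique out-edge $\sigma_v$ to exactly that $T_f$, while vertices of $F^*$ contribute nothing. Combined with the bijection above, this shows that the sets $T_f$ are pairwise disjoint and their union is $\{\sigma_v : v\in V\setminus F^*\} = E\cap M^*$, with no edge counted twice. Hence $\sum_{f\in F} d(T_f) = \sum_{e\in E\cap M^*} d_e = d(E\cap M^*)$.

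There is essentially no obstacle here; the only point requiring care is avoiding double counting, i.e.\ making sure each edge of $E\cap M^*$ is charged to exactly one $T_f$. This is guaranteed precisely by the two facts used: $v\mapsto\sigma_v$ is a bijection onto $E\cap M^*$, and $\{C_f\}_{f\in F}$ partitions the index set $V$ (equivalently $V\setminus F^*$, since the $F^*$-vertices have no out-edge).
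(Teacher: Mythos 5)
Your proposal is correct and takes the same approach as the paper: the paper's entire proof is the one-line assertion that $\{T_f\}_{f\in F}$ partitions $E\cap M^*$, and you simply supply the justification (the bijection $v\mapsto\sigma_v$ from $V\setminus F^*$ onto $E\cap M^*$ together with the fact that $\{C_f\}_{f\in F}$ partitions $V$) that the paper treats as clear.
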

\begin{proof} It is clear that $\{T_f\}_{f\in F}$ partitions $E\cap M^*$.\end{proof}

\medskip
We are now ready to bound the increase in the $\tree$ cost under swaps \swap. By Claim~\ref{cl:fstar-partn} it follows
that for each swap $(D,A)\in \swap$, $A$ is a part in $\f^*$ (and so in $\p^*$); define $E_A$ as the true-edges of $M$
(possibly empty) that are paired with the part $A$ of $\p^*$.

\begin{claim}\label{cl:kmf-Ea} $\{E_A : (D,A)\in \swap\}$ is a partition of $M''$.\end{claim}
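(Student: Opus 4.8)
The plan is to unwind the pairing between the parts of $\p$ and $\p^*$ supplied by Theorem~\ref{th:mat-exch} and to classify the true edges of $M$ according to the type of part of $\p^*$ they are paired with. The parts of $\p^*$ come in two kinds: the true-edge singletons $\{e\}$ with $e\in M^*\cap E$, and the parts of $\f^*$, each of which consists solely of root-edges. Since $\p$ is a partition of all of $M$ and the pairing between parts of $\p$ and $\p^*$ is a bijection, every edge of $M$ --- in particular every true edge --- belongs to a unique part of $\p$, which is paired with a unique part of $\p^*$.

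First I would split $M\cap E$ accordingly: a true edge $e\in M\cap E$ lies in a part of $\p$ paired either with some true-edge singleton of $\p^*$, in which case $e\in M'$ by the definition of $M'$; or with a part $A\in\f^*$, in which case $e\in E_A$ by the definition of $E_A$. Because the parts of $\p$ are pairwise disjoint and $E_A$ is precisely the set of true edges in the (unique) part of $\p$ paired with $A$, this yields the disjoint decomposition $M\cap E = M' \,\sqcup\, \bigsqcup_{A\in\f^*} E_A$. Subtracting $M'$ gives $M'' = \bigsqcup_{A\in\f^*} E_A$, a disjoint union of edge-sets, some possibly empty.

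Finally I would invoke Claim~\ref{cl:fstar-partn} to pass from parts of $\f^*$ back to swaps: every part of $\f^*$ is the added set $A$ of some swap $(D,A)\in\swap$, and conversely every swap's added set is a part of $\f^*$. Since $E_A$ depends only on $A$ and not on the accompanying $D$, the family $\{E_A : (D,A)\in\swap\}$ is exactly $\{E_A : A\in\f^*\}$, which by the previous paragraph partitions $M''$. I do not expect a genuine obstacle here; the only care required is the bookkeeping --- checking that $E_A$ is well defined (each $A\in\f^*$ is a part of $\p^*$, hence paired with a unique part of $\p$) and that each true edge of $M$ is classified exactly once, which is just the statement that $\p$ partitions $M$ and the $\p$--$\p^*$ pairing is a bijection. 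The substantive effort lies not in this claim but in the subsequent bounds relating $d(E_A)$ to the median and tree costs of the swapped solutions.
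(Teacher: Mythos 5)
Your proof is correct and follows essentially the same route as the paper's: classify the true edges of $M$ by the type of part of $\p^*$ they are paired with (true-edge singleton giving $M'$, root-edge part in $\f^*$ giving the $E_A$'s), observe that the pairing is a bijection so this classification yields a disjoint decomposition of $M\cap E$, and then use Claim~\ref{cl:fstar-partn} to identify parts of $\f^*$ with the added sets of swaps in $\swap$. The one point you make that the paper leaves implicit --- that $E_A$ depends only on $A$, so that the same set is not double-counted when several swaps $(b,a)\in\swap$ share the same singleton $A=\{a\}$ (as happens for parts $F_i$ with $|F_i|>t$) --- is worth noting, since it is what justifies reading $\{E_A : (D,A)\in\swap\}$ as the set $\{E_A : A\in\f^*\}$ rather than as a multiset.
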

\begin{proof}
Consider the partition $\p$ of $M$ given by Theorem~\ref{th:mat-exch} applied to $\p^*$. By definition, $M'\sse E\cap
M$ are the true edges of $M$ paired (by $\p$ and $\p^*$) with true edges of $M^*$; and $M''=(E\cap M)\setminus M'$ are
paired with parts from $\f^*$ (i.e. root edges of $M^*$). For each part $\pi\in \f^*$ (and also $\p^*$) let $E(\pi)\sse
M''$ denote the $M''$-edges paired with $\pi$. It follows that $\{E(\pi) : \pi\in \f^*\}$ partitions $M''$. Using the
second fact in Claim~\ref{cl:fstar-partn} and the definition $E_A$s, we have $\{E_A : (D,A)\in \swap\} = \{E(\pi) :
\pi\in \f^*\}$, a partition of $M''$.
\end{proof}

We prove the following key lemma.
%We first consider the 1-swaps in $\swap$. Let $(b,a)$ be a 1-swap

\begin{lemma}\label{lem:kmf-main}
For each swap $(D,A)\in \swap$, $\tree(F-D+A)-\tree(F)\le 2\cdot \sum_{f\in D} d(T_f) - d(E_A)$.
\end{lemma}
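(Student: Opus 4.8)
The plan is to construct, for the swap $(D,A)\in\swap$, an explicit spanning tree of the graph obtained by contracting $F-D+A$ whose true-edge cost is at most $\tree(F) + 2\sum_{f\in D}d(T_f) - d(E_A)$; since $\tree(F-D+A)$ is the minimum over all such trees, the bound follows. I would start from the tree $M$ (with root-edges $F$ and true-edges $E\cap M$) and modify it in three stages: delete the root-edges of $D$, add the root-edges of $A$, and repair connectivity. After deleting the $|D|$ root-edges $\{(r,f):f\in D\}$ from $M$, the graph breaks into the component containing $r$ together with the subtrees $C_f$ for $f\in D$ (using the notation that $C_f$ is the true-edge component of $M$ attached at $f$). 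Each $C_f$ must be reconnected: I will reconnect it using the out-edges $T_f = \{\sigma_v : v\in C_f\}$ of the directed forest on $E\cap M^*$. The key combinatorial point is that adding all of $T_f$ to the forest inside $C_f$ reconnects $C_f$ to the rest — because following the $\sigma$-pointers from any vertex of $C_f$ eventually reaches an $F^*$-vertex, and by Claim~\ref{cl:fstar-partn} every $F^*$-vertex lies in some part $A'$ which is now a root-edge. Actually the cleanest accounting is: the edges $T_f$ for $f\in D$, together with the true-edges already in $M$ minus the edges $E_A$ we remove, plus the root-edges $A$, form a connected spanning subgraph; a spanning tree inside it costs at most $d(E\cap M) - d(E_A) + \sum_{f\in D}d(T_f)$ in true-edges.

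The factor $2$ (rather than $1$) in front of $\sum_{f\in D}d(T_f)$ is where I expect the real work. The issue is that $T_f$ consists of edges of $M^*$, and an edge $\sigma_v\in T_f$ may point from $C_f$ into another component $C_{f'}$ with $f'\notin D$, or the endpoints of $\sigma_v$ may both be awkwardly placed relative to the current forest; also the $\sigma_v$ need not themselves be cheap relative to distances within $C_f$. The standard trick here is to use the triangle inequality: instead of adding $\sigma_v$ directly, route through $f$ or through the center vertices, so the cost charged is bounded by $2 d(\sigma_v)$ summed appropriately, or to use a shortcutting argument on the path $v \to \sigma_v$-head $\to \cdots \to F^*$-vertex $\to r$. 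Concretely, I would argue that the multiset $\bigcup_{f\in D}T_f$, viewed as edges of the forest $E\cap M^*$, when we contract $A$ (which kills the $F^*$-vertices these paths terminate at), decomposes into paths/components that can be attached to the current partial tree, and a careful Euler-tour / doubling argument on these bounds the incurred true-edge cost by $2\sum_{f\in D}d(T_f)$ while the removed edges $E_A$ (paired with $A$ under $\p$, hence their removal keeps things spanning after adding $A$) save $d(E_A)$.

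Assembling: let $N$ be the edge set $(E\cap M)\setminus E_A$ together with a cost-$\le 2\sum_{f\in D}d(T_f)$ set of true-edges reconnecting the pieces, together with the root-edges $A$ and the surviving root-edges $F\setminus D$. By Claim~\ref{cl:kmf-Ea} and Theorem~\ref{th:mat-exch}, removing $E_A$ from $M$ and adding the root-edges $A$ keeps a spanning structure modulo the reconnection of the $C_f$, $f\in D$; so $N$ spans $H$, its root-edges are exactly $(F\setminus D)\cup A = F-D+A$, and hence its true-edges form a spanning tree of the metric graph with $F-D+A$ contracted. Therefore
\[
\tree(F-D+A) \;\le\; d(E\cap M) - d(E_A) + 2\sum_{f\in D} d(T_f) \;=\; \tree(F) - d(E_A) + 2\sum_{f\in D}d(T_f),
\]
which is exactly the claimed inequality. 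The main obstacle, to reiterate, is making the reconnection-cost bound $2\sum_{f\in D}d(T_f)$ rigorous — this is where the exchange property of Theorem~\ref{th:mat-exch} and the structure of the directed forest $\{\sigma_v\}$ (every pointer-path ends at an $F^*$-vertex, and every $F^*$-vertex is covered by the parts $\f^*$) must be combined, likely via a spanning-tree / doubling argument on each relevant component of $E\cap M^*$.
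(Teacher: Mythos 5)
Your high-level plan matches the paper: start from $M$, apply the exchange property of Theorem~\ref{th:mat-exch} to argue that $(E\cap M)\setminus E_A$ remains a forest whose components each contain a vertex of $F\cup A$, add $\bigcup_{f\in D}T_f$, and then pay one extra $\sum_{f\in D}d(T_f)$ in ``repair'' edges. But you leave exactly the hard step unresolved, and along the way you make a claim that is false and masks why the repair step is necessary.

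The false claim is: ``following the $\sigma$-pointers from any vertex of $C_f$ eventually reaches an $F^*$-vertex, and by Claim~\ref{cl:fstar-partn} every $F^*$-vertex lies in some part $A'$ which is now a root-edge.'' Only the single part $A$ becomes root-edges after the swap $(D,A)$; the other parts $A'\neq A$ of $\f^*$ are \emph{not} open centers in $F-D+A$. So the $\sigma$-path out of a dropped component may terminate at some $f^*\in F^*\setminus A$, which is closed, and adding $T_f$ alone does not reconnect. That is precisely why $N$ is needed. The argument you would need (and the paper supplies) is a structural property of the GT08 swaps that you never invoke: for every $f^*\in F^*\setminus A$, the nearest facility $\eta(f^*)$ is \emph{not} in $D$ (Claim~\ref{cl:tree-loc2} in the paper, which holds by the $c_i$-structure of the swap set). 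Given that, one repairs each stranded component by adding the single edge $(f^*,\eta(f^*))$ and bounds its length, via triangle inequality along a $\sigma$-path inside $\bigcup_{f\in U}T_f$ from some $f'\in U$ to $f^*$, by $\sum_{f\in U}d(T_f)$; summing over disjoint minimal stranded component-sets $U\subseteq D$ gives $d(N)\le\sum_{f\in D}d(T_f)$. Your proposed ``Euler-tour / doubling'' argument is not what is needed and would not obviously yield this bound; the factor $2$ arises simply from paying $\sum_{f\in D}d(T_f)$ twice (once for inserting $T_f$, once for $N$), not from doubling any tree. Without Claim~\ref{cl:tree-loc2}-style control over where $\eta$ sends the closed $F^*$-vertices, the repair cost cannot be charged to $\sum_{f\in D}d(T_f)$ and the lemma does not follow.
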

\begin{proof}
By Claim~\ref{cl:fstar-partn}, $A\sse F^*$ is a part in $\p^*$. Recall that $E_A$ denotes the true-edges of $M$ paired
with $A$; let $F_A$ denote the root-edges of $M$ paired with $A$. Then using Theorem~\ref{th:mat-exch} it follows that
$(M\setminus E_A \setminus F_A)\bigcup A$ is a spanning tree in $H$. Hence $S_A := (E\cap M) \setminus E_A$ is a forest
with each component containing some vertex from $F\cup A$; for any $f\in F\cup A$ let $C'_f$ denote vertices in the
component containing $f$.
%let \C denote the set of trees/components of this forest.
In other words, $S_A$ connects connects each vertex to some vertex of $F\cup A$.

%is a spanning tree in the original metric upon contracting vertices $F\cup A$; so $\tree(F\cup A)\le \tree(F) -d(E_A)$.

Consider the edge set $S'_A := S_A \bigcup_{f\in D} T_f$. We will add some edges $N$ so that $S'_A\bigcup N$ connects
each $D$-vertex to some vertex of $F-D+A$. Since $S_A$ already connects all vertices to $F\cup A$, it would follow that
$S'_A\bigcup N$ connects all vertices to $F-D+A$, i.e.
$$\tree(F+A-D)\le d(S'_A) + d(N) \le \tree(F) - d(E_A) +\sum_{f\in D} d(T_f)+ d(N).$$
To prove the lemma it now suffices to construct a set $N$ with $d(N)\le \sum_{f\in D} d(T_f)$, such that $S'_A\bigcup
N$ connects each $D$-vertex to $F-D+A$. Below, for any $V'\sse V$ we use $\delta(V')$ to denote the edges of $S'_A$
between $V'$ and $V\setminus V'$.

\paragraph{Constructing $N$} Consider any minimal $U\sse D$ such that $\delta\left(\bigcup_{f\in U} C'_f\right)=\emptyset$; recall that $C'$s are
the connected components of $S_A\sse S'_A$. By minimality of $U$, it follows that $\bigcup_{f\in U} C'_f$ is connected
in $S'_A$. We now prove two simple claims:

%\begin{claim}\label{cl:tree-loc1}$\bigcup_{f\in U} C'_f$ is connected in $S'_A$.
%\end{claim}
%\begin{proof}
%By minimality of $U$. %it follows that $\bigcup_{f\in U} C'_f$ is connected in $S'_A$.
%\end{proof}

\begin{claim}\label{cl:tree-loc2} For any $f^*\in F^*\setminus A$ we have $\eta(f^*)\not\in D$.
\end{claim}
\begin{proof} By construction of the swaps \swap in Theorem~\ref{th:GT-swap}.
\end{proof}

\begin{claim}\label{cl:tree-loc3}
There exists $f^*\in F^* \bigcap \left(\bigcup_{f\in U} C'_f\right)$ and $f'\in U$ such that $\bigcup_{f\in U} T_f$
contains a path between $f'$ and $f^*$.
\end{claim}
\begin{proof} Let any $f'\in U$. Consider the directed path $P$ from $f'$ obtained by following {\em out-edges $\sigma$} until the {\em first
occurrence} of  a vertex $v\in F^*$ or $v\in V\setminus \left(\bigcup_{f\in U} C'_f\right)$. Since $F^*$-vertices are
the only ones with no out-edge $\sigma$, and $\{\sigma_w : w\in V\}=E\cap M^*$ is acyclic, there must exist such a
vertex $v\in F^* \bigcup \left( V\setminus \left(\bigcup_{f\in U} C'_f\right)\right)$. Observe that $C'_f\sse C_f$ for
all $f\in D\supseteq U$; recall that $C$s (resp. $C'$s) are the connected components in $M$ (resp. $S_A\sse M$). So
$P\sse \{\sigma_w : w\in \bigcup_{f\in U} C'_f\}\sse  \{\sigma_w : w\in \bigcup_{f\in U} C_f\} = \bigcup_{f\in U} T_f$.
Suppose that vertex $v\not\in F^*$, then $v\in V\setminus \left(\bigcup_{f\in U} C'_f\right)$ which implies
$\delta\left(\bigcup_{f\in U} C'_f\right) \ne \emptyset$ since path $P\sse S'_A$ leaves $\bigcup_{f\in U} C'_f$. So we
have $v\in F^* \bigcap \left(\bigcup_{f\in U} C'_f\right)$ and $P\sse \bigcup_{f\in U} T_f$ is a path from $f'$ to $v$.
\end{proof}

Consider $f^*$ and $f'$ as given Claim~\ref{cl:tree-loc3}. If $f^*\in A$ then the component $\bigcup_{f\in U} C'_f$ of
$S'_A$ is already connected to $F-D+A$. Otherwise by Claim~\ref{cl:tree-loc2} we have $\eta(f^*)\not\in D$; in this
case we add edge $\left(f^*,\eta(f^*)\right)$ to $N$ which connects component $\bigcup_{f\in U} C'_f$ to $\eta(f^*)\in
F-D\sse F-D+A$. Now using Claim~\ref{cl:tree-loc3}, $d\left(f^*,\eta(f^*)\right)\le d(f^*,f')\le \sum_{f\in U}
d(T_f)$.\footnote{This is the only place in the proof where we use uniformity in the metrics for $k$-median and MST.}
In either case, $U$ is connected to $F-D+A$ in $S'_A\bigcup N$, and cost of $N$ increases by at most $\sum_{f\in U}
d(T_f)$.

We apply the above argument to {\em every minimal} $U\sse D$ with $\delta\left(\bigcup_{f\in U} C'_f\right)=\emptyset$.
The increase in cost of $N$ due to each such $U$ is at most $\sum_{f\in U} d(T_f)$. Since such minimal sets $U$s are
disjoint, we have $d(N)\le \sum_{f\in D} d(T_f)$. Clearly $S'_A\bigcup N$ connects each $D$-vertex to $F-D+A$.
\end{proof}
\medskip

Using this lemma for each $(D,A)\in\swap$ weighted by $\alpha(D,A)$ (from Theorem~\ref{th:GT-swap}) and adding,
\begin{eqnarray}
&& \sum_{(D,A)\in \swap} \alpha(D,A)\cdot \left[ \tree(F-D+A)-\tree(F) \right] \notag \\
&\le &2\cdot \sum_{(D,A)\in \swap} \alpha(D,A)\cdot \sum_{f\in D} d(T_f) - \sum_{(D,A)\in \swap} \alpha(D,A)\cdot d(E_A) \label{eq:kmf-ls1}\\
& = & 2\, \sum_{f\in F} \left( \sum_{(D,A)\in\swap : f\in D} \alpha(D,A) \right) \cdot d(T_f) - \sum_{e\in M''} \left(
\sum_{(D,A)\in\swap : e\in E_A} \alpha(D,A) \right)\cdot d_e \label{eq:kmf-ls2}\\
& \le & 2\left(1+\frac1t\right) \, \sum_{f\in F} d(T_f) - \sum_{e\in M''} d_e \label{eq:kmf-ls3}\\
&=& 2\left(1+\frac1t\right)\cdot d(E\cap M^*) - d(M'') \label{eq:kmf-ls4}
\end{eqnarray}
Above~\eqref{eq:kmf-ls1} is by Lemma~\ref{lem:kmf-main},~\eqref{eq:kmf-ls2} is by interchanging summations using the
fact that $E_A\sse M''$ (for all $(D,A)\in\swap$) from Claim~\ref{cl:kmf-Ea}. The first term in~\eqref{eq:kmf-ls3} uses
the property in Theorem~\ref{th:GT-swap} that each $f\in F$ is dropped (i.e. $f\in D$) to extent at most $1+\frac1t$;
the second term uses the property in Theorem~\ref{th:GT-swap} that each $f^*\in F^*$ is added to extent one in \swap
and Claim~\ref{cl:kmf-Ea}. Finally~\eqref{eq:kmf-ls4} is by Claim~\ref{cl:kmf-m*}.

Adding the inequality $0\le d(E\cap M^*) - d(M')$ from Claim~\ref{cl:kmf-m'} yields:
\begin{equation*}\sum_{(D,A)\in \swap}
\alpha(D,A)\cdot \left[ \tree(F-D+A)-\tree(F) \right] \le \left(3+\frac2t\right)\cdot d(E\cap M^*) - d(E\cap
M),\end{equation*} since $M'$ and $M''$ partition the true edges $E\cap M$. Thus we obtain
Inequality~\eqref{eq:tree-swap}.

\section{Non-uniform $k$ median forest}\label{sec:non-unif-kmf}
In this section we study the following extension of $k$ median forest. There is a set of vertices $V$ with weights
$\{q_u\}_{u\in V}$, two metrics $d$ and $c$ defined on $V$, and a bound $k$. The goal is to find $S\sse V$ with $|S|=k$
minimizing $\sum_{u\in V} q_u\cdot d(u,S) \, + \, c\big(\,\mbox{MST}(V/S)\,\big)$. Here $\mbox{MST}(V/S)$ is a minimum
spanning tree in the graph obtained by contracting $S$ to a single vertex under metric $c$. The difference from the
\kmf problem is that the cost functions for the $k$-median and MST parts in the objective are different.

It is natural to consider the local search algorithm in this setting as well, since local search achieves good
approximations for both $k$-median and MST. However the next lemma shows that the locality gap is unbounded even if we
allow multiple swaps. The example is similar to the locality gap in~\cite{KKNSS11}.
\begin{lemma}
The locality gap of non-uniform \kmf with multi-swaps is unbounded.
\end{lemma}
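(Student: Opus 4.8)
The plan is to construct a family of instances in which a local optimum (under $t$-swaps, for any fixed $t$) has cost arbitrarily larger than the global optimum. Following the spirit of the matroid-median locality-gap example in~\cite{KKNSS11}, I would separate the roles of the two metrics so that the $d$-metric forces local search to ``see'' a certain set of $k$ centers as locally optimal, while the $c$-metric (governing the MST term) is the real bottleneck and is only improved by a global move that changes many centers at once.

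Concretely, I would take a large parameter $m$ and build $V$ out of $k$ ``cluster gadgets.'' In each gadget place one special vertex that is very cheap to open with respect to $d$ (so that $\med$ pins the local solution to those $k$ special vertices), together with a few other vertices. Arrange the $c$-metric so that the $k$ special vertices are mutually very far apart in $c$, forcing $c(\MST(V/F))$ to be $\Theta(k\cdot M)$ for the $d$-preferred solution $F$; but there is an alternative solution $F^*$ (opening a different vertex in each gadget) for which the contracted graph has a spanning tree of $c$-cost $O(1)$, while $\med(F^*)$ is only slightly larger than $\med(F)$ — say larger by an additive $O(1)$, negligible next to the $\Theta(kM)$ savings in the tree term. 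The key design constraint is that \emph{no} swap of at most $t$ centers can simultaneously fix enough of the expensive MST edges to decrease $\Phi$: each single swap either leaves most of the long $c$-edges in the contracted MST, or it pays a prohibitive $\med$ penalty (because moving a center off its special vertex in one gadget costs a lot in $d$ unless \emph{all} gadgets are switched together). One clean way to enforce this is to make the $c$-distances ``all or nothing'': switching fewer than $k$ gadgets leaves the contracted graph's MST essentially unchanged in $c$-cost, so partial progress is impossible. Since $t$ is a constant and $k\to\infty$, no $t$-swap helps, so $F$ is a local optimum, yet $\Phi(F)/\Phi(F^*) = \Theta(kM) \to \infty$.

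After fixing the numbers I would verify three things: (i) $F$ is feasible and $\med$, $\tree$ of $F$ are as claimed; (ii) for every $D\subseteq F$, $A\subseteq V\setminus F$ with $|D|=|A|\le t$, we have $\Phi(F-D+A)\ge \Phi(F)$ — this is where I'd do a short case analysis on how many gadgets $D$ touches, showing the $\med$ increase dominates whenever $A$ is not the full alternative set, and the $\tree$ term fails to drop when $|D|<k$; and (iii) $\med(F^*)+c(\MST(V/F^*))$ is small, exhibiting the gap. I would also check $c$ and $d$ are genuine metrics (triangle inequality), which is the routine part — typically one takes $c$ to be a shortest-path metric on an auxiliary weighted graph and $d$ a similarly induced metric, so triangle inequality is automatic.

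The main obstacle I expect is item (ii): ruling out \emph{all} $t$-swaps, including mixed swaps that move a few centers within gadgets and a few across. The danger is a swap that cleverly trades a small $\med$ increase for a partial MST improvement. The design must ensure the MST term is ``rigid'' — it does not improve at all until the configuration is globally switched — which is exactly the feature the $c$-metric has to be engineered to provide (e.g. by making every pair among the $k$ ``bad'' representatives equidistant at $c$-distance $M$, and every ``good'' representative at $c$-distance $1$ from a common hub, so that the contracted MST cost is a step function of the number of good representatives chosen that is flat until it collapses). Once that rigidity is in place, the case analysis becomes short and the unboundedness follows.
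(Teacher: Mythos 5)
Your high-level plan diverges from the paper's proof, and as described it has a genuine gap. You want the local optimum $F$ to have a huge MST term, $c(\MST(V/F)) = \Theta(kM)$, while the global optimum $F^*$ has a tiny one, with the median terms of both solutions being small. The paper does something quite different: both the local optimum $L$ and the global optimum $S^*$ have MST term \emph{zero}; the MST term is used only to block swaps (any swap that breaks a ``one-center-per-gadget'' pattern jumps the MST cost to $M$), and the actual gap $w\gg 1$ comes entirely from the \emph{median} term, via a one-index ``shift'' in the metric $d$ so that $u_{i,2}$ is $d$-close to $u_{i+1,1}$ rather than to $u_{i,1}$. That shift is the whole trick, and it is absent from your proposal.

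There are two concrete problems with your construction. First, you say to make the $k$ special (``bad'') centers mutually far apart in $c$ so that $c(\MST(V/F))$ is large. But $\MST(V/F)$ contracts $F$ to a single vertex, so the pairwise $c$-distances among the $F$-vertices are completely irrelevant; what makes $\tree(F)$ large is the non-centers being far in $c$ from all the centers (and from each other). The sketch at the end --- ``bad'' reps equidistant at $c$-distance $M$, ``good'' reps at $c$-distance $1$ from a hub --- actually gives $\tree(F)\approx k+M$, not $\Theta(kM)$, since every $g_i$ routes to the hub cheaply and only one long edge reaches the supernode. Second, and more fundamentally, your ``rigidity'' claim (the MST cost is a step function, flat until all $k$ gadgets are switched) does not hold for any gadget design of this kind. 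If $F^*$'s vertices are the cheap alternatives, then swapping a single matched pair $(b_i\to g_i)$ typically removes one long $c$-edge from the contracted MST, decreasing $\tree$ by roughly $M$; to block that swap you would need the median cost to rise by $\Theta(M)$, which defeats the requirement that $\med(F^*)$ be small, since $F^*$ contains all the $g_i$'s. Conversely, if you make the bad reps far from everything in $c$ so that $\tree$ really is flat under partial swaps, then the bad reps are also far from $F^*$, and $\tree(F^*)$ becomes $\Theta(kM)$ as well, killing the gap. This tension is precisely why the paper abandons trying to create a gap in the tree term and instead creates it in the median term while using the tree term purely as a barrier. A further, minor point: your argument only handles $t$-swaps for fixed $t$ with $k\to\infty$, whereas the paper's instance is locally optimal even under $(k-1)$-swaps on a fixed $2k$-vertex instance, which is a stronger and cleaner statement of unboundedness.
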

\begin{proof}
Fix values $M\gg w\gg 1$. Let $V=\left\{ u_{i,j} : i\in[k],\, j\in\{1,2\}\right\}$, so $|V|=2k$. Define vertex-weights
as follows: $q(u_{k,2})=1$ and all other vertices have weight $w$. The metric $d$ for the $k$-median part is:
$$
d(x,y) = \left\{
\begin{array}{ll}
0 & \mbox{ if either }x=y \mbox{ or }\{x,y\} = \{u_{i,2},u_{i+1,1}\} \mbox{ for some }i\in [k-1]\\
1 & \mbox{ otherwise} \end{array}\right.
$$
The second metric $c$ for the MST part of the objective is:
$$
c(x,y) = \left\{
\begin{array}{ll}
0 & \mbox{ if either }x=y \mbox{ or }\{x,y\} = \{u_{i,1},u_{i,2}\} \mbox{ for some }i\in [k]\\
M & \mbox{ otherwise} \end{array}\right.
$$
Observe that for any $S\sse V$ with $|S|=k$, we have $c(MST(V/S))<M$ iff $|S\bigcap \{u_{i,1}, u_{i,2}\}|=1$ for all
$i\in[k]$. So the non-uniform \kmf objective is smaller than $M$ only if $|S\bigcap \{u_{i,1}, u_{i,2}\}|=1,\, \forall
i\in[k]$.

We claim that the optimal value is at most one. Consider the solution $S^*=\{u_{i,1}\}_{i=1}^k$. It is clear that
$c(MST(V/S^*))=0$. Moreover, $\sum_{u\in V} q(u)\cdot d(u,S^*) = 1$ with vertex $u_{k,2}$ being the only contributor.

We now claim that the solution $L=\{u_{i,2}\}_{i=1}^k$ is locally optimal under even $(k-1)$-swaps. First, observe that
$c(MST(V/L))=0$ and $\sum_{u\in V} q(u)\cdot d(u,S^*) = w$ with vertex $u_{1,1}$ being the only contributor. So $L$ has
objective value of $w$. Secondly, notice that every solution $S$ obtained by some $(k-1)$-swap of $L$ has either
MST-objective of $M$ or median-objective of $w$. Thus $L$ is a local optimum and the locality gap is $w\gg 1$.
\end{proof}

\medskip

We remark that the near-optimality proof of local search in the previous section only requires the following
consistency property between the two metrics: for any pair $e,f$ of edges $d_e\le d_f \implies c_e\le c_f$. In spite of
the large locality gap, we show that non-uniform \kmf admits a constant factor approximation algorithm via an LP
approach.

\def\spp{\mathbb{SP}}
\newcommand{\pri}{{\mathcal{P}}}
\newcommand{\lpo}{\ensuremath{\mathsf{LP_{med}}\xspace}}
\newcommand{\LP}{\ensuremath{\mathsf{LP}\xspace}}
\newcommand{\I}{\ensuremath{{\mathcal{I}}}\xspace}
\newcommand{\J}{\ensuremath{{\mathcal{M}}}\xspace}

\paragraph{The algorithm.} We make use of the following natural LP relaxation for non-uniform \kmf. The variables $y_v$ denote the probability
of locating a depot at $v$; $x_{uv}$ denotes the extent to which vertex $u$ is connected to a depot at $v$ (for the
$k$-median part); and $z_e$ denotes the extent to which edge $e$ is used in the MST part of the objective. Also
$E={V\choose 2}$ is the set of all edges in the metric. Define $H$ to be the complete graph on vertices $V\bigcup\{r\}$
(for a new vertex $r$) with edges $E \bigcup \{(r,v) :\, v\in V\}$.
\begin{alignat}{2}
  \mbox{minimize } \sum_{u \in V} q_u \cdot \sum_{v \in V} d(u,v) x_{uv} & + \sum_{e\in E} c_e\cdot z_e & &
  \tag{$\mathsf{LP}$} \\
  \mbox{subject to } \sum_{v \in V} x_{uv} &= 1 & \qquad & \forall \,
  u \in V   \label{eq:1} \\
  x_{uv} &\leq y_{v} & \qquad & \forall \, u \in
  V, v \in V \label{eq:2} \\
  \sum_{v \in V} y_{v} &\leq k  & \qquad & \label{eq:3}\\
  (y,z) &\in \spp(H)& \qquad & \label{eq:4}\\
  x_{uv}, y_{v}, z_e &\geq 0 & \qquad & \forall \, u, v \in V,\,\,\, \forall e\in E \label{eq:5}
\end{alignat}
Above $\spp(H)$ denotes the spanning tree polytope of graph $H$, which admits a linear description in terms of its edge
variables; see eg.~\cite{Schr-book}. Also $(y,z) \in \spp(H)$ corresponds to the fractional spanning tree in $H$ with
values $z_e$ on edges $e\in E$ and value $y_v$ on each edge $(r,v)$. It can be checked directly that this is a valid
relaxation of non-uniform \kmf. Moreover this LP can be solved exactly in polynomial time to obtain solution
$(x^*,y^*,z^*)$ using the Ellipsoid algorithm.

We now describe the rounding procedure. Let \I denote the instance of \kmf and $\lpo = \sum_{u \in V} \sum_{v \in V}
d(u,v) x^*_{uv}$ denote the median part of the optimal LP solution. Apply Stage I of the rounding algorithm
in~\cite{KKNSS11} to modify variables $x^*$ to $\overline{x}$ (here $y^*$ and $z^*$ remain unchanged), with the
following properties:
%. This results in new solution with:
\begin{OneLiners}
 \item Set $R\sse V$ of representatives with weights $w_u$ for each $u\in R$, which defines a new instance \J of non-uniform \kmf (the weights of
 vertices in $V\setminus R$ are zero).
 \item Any solution to the new instance \J with objective $C$ is a solution to the original instance \I having
 objective at most $C+4\lpo$.
 \item $(\overline{x},y^*,z^*)$ is feasible for $\LP(\J)$.
 \item Disjoint collection of subsets $\{\pri(u)\sse V\}_{u\in R}$ with $\sum_{v \in \pri(u)} y_{v}\ge \frac12$ for all $u\in R$.
 \item Collection of pseudoroots $\{(a_i,b_i)\in {R\choose 2}\}_{i=1}^t$ with each representative in at
most one pseudoroot.
 \item Map $\sigma:R\rightarrow R$ where $\sigma(u)$ lies in a pseudoroot for each $u\in R$.
 \item Each $u\in R$ is connected (under $\overline{x}$) only to $\pri(u)\cup \{\sigma(u)\}$.
 \item $\sum_{u\in R} \,\,w_u \cdot \left[ \sum_{v\in \pri(u)} d_{u,v}\cdot \overline{x}_{u,v} \, + \, d_{u,\sigma(u)}\cdot
 \left(1- \sum_{v\in \pri(u)} \overline{x}_{u,v} \right) \right] \le 4\cdot \lpo$.
\end{OneLiners}

\medskip
Now apply the LP reformulation from Stage II in~\cite{KKNSS11} to eliminate $x$-variables in \LP, using the above
structure of $(\overline{x},y,z)$, and obtain: \begin{small}\begin{alignat}{2}
  \mbox{minimize } \sum_{u\in R} \,\,w_u \cdot \bigg[  \sum_{v\in \pri(u)} d_{u,v}\cdot y_v \,  + \, d_{u,\sigma(u)}\cdot
& \left(1- \sum_{v\in \pri(u)} y_v \right) \bigg]  + & \sum_{e\in E} c_e\cdot z_e  &  \tag{$\mathsf{LP_{new}}$} \\
  \mbox{subject to } \sum_{v \in \pri(u)} y_{v} &\le 1 & \qquad & \forall \,   u \in R   \label{nlp:1} \\
  \sum_{v \in \pri(a_i)} y_{v} + \sum_{v \in \pri(b_i)} y_{v} &\geq 1 & \qquad & \forall \, \mbox{pseudoroots} (a_i,b_i) \label{nlp:2} \\
  \sum_{v \in V} y_{v} &\leq k  & \qquad & \label{nlp:3}\\
  (y,z) &\in \spp(H)& \qquad & \label{nlp:4}\\
  y_{v}, z_e &\geq 0 & \qquad & \forall \, v \in V,\,\,\, \forall e\in E \label{nlp:5}
\end{alignat}\end{small}

Based on the above properties, it follows that $(y^*,z^*)$ is a feasible solution to $\mathsf{LP_{new}}$ with objective
at most $4\cdot \lpo + c\cdot z^*$, i.e. at most four times the optimal value of $\LP(\I)$. The advantage of the new LP
is:
\begin{lemma}
Any basic feasible solution to $\mathsf{LP_{new}}$ is integral.
\end{lemma}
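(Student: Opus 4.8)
The plan is to show that the constraint matrix of $\mathsf{LP_{new}}$ is, up to the spanning-tree constraints, essentially the incidence structure of a laminar/partition-type family, so that the polytope is integral by a standard ``total dual integrality of matroid intersection'' type argument. Concretely, I would first rewrite the feasible region in terms of the aggregated variables $Y_u := \sum_{v\in\pri(u)} y_v$ for $u\in R$. Because the sets $\{\pri(u)\}_{u\in R}$ are \emph{disjoint} (property four of Stage I) and because each $u\in R$ is connected only to $\pri(u)\cup\{\sigma(u)\}$, the only way the $y$-variables enter $\mathsf{LP_{new}}$ outside of the spanning-tree polytope $\spp(H)$ is through these disjoint partial sums $Y_u$ and the global bound $\sum_v y_v\le k$. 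Constraints \eqref{nlp:1} say $Y_u\le 1$, constraints \eqref{nlp:2} say $Y_{a_i}+Y_{b_i}\ge 1$ for the disjoint pseudoroots, and the objective is linear in the $Y_u$'s and in $z$.

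Next I would observe that the pseudoroot constraints \eqref{nlp:2}, together with the upper bounds $Y_u\le 1$, form a system on \emph{disjoint} pairs $(a_i,b_i)$ — each representative appears in at most one pseudoroot — so that block is a disjoint union of tiny $2$-variable systems and imposes no interaction across different $i$. The remaining nontrivial structure is $(y,z)\in\spp(H)$ together with $\sum_v y_v\le k$ and the box constraints $0\le Y_u\le 1$ (which pull back to constraints on $\sum_{v\in\pri(u)}y_v$). I would then invoke the fact that $\spp(H)$ is the base polytope of the graphic matroid of $H$, and that intersecting a matroid base polytope with constraints of the form ``$\sum_{e\in B_j} x_e \le r_j$'' for a \emph{laminar} (here: disjoint, hence trivially laminar) family $\{B_j\}$, plus the single extra cardinality-type cut $\sum_v y_v \le k$, yields an integral polytope — this is the classical result that the intersection of a matroid polytope with a partition/laminar system of constraints remains integral (equivalently, it is the intersection of two matroid polytopes, since a laminar family of upper bounds on a ground set defines a matroid). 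One should check that $\sum_v y_v\le k$ and the family $\{\pri(u)\}$ and the $\{(r,v):v\in B_j\}$ identifications all fit into one laminar family on the edge set of $H$; since the $\pri(u)$ are disjoint subsets of $V$ and $V$ itself (via the root edges) is the top set, this is immediate.

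Therefore any basic feasible solution of $\mathsf{LP_{new}}$ is a vertex of the intersection of two matroid polytopes over the edge set of $H$ (one being the graphic matroid of $H$, the other the partition matroid induced by the disjoint caps $Y_u\le 1$ and $\sum_v y_v\le k$), intersected with the lower-bound pseudoroot constraints on disjoint pairs. For the lower bounds one argues separately: fix the optimal face, and note that on each disjoint pseudoroot pair, if $Y_{a_i}+Y_{b_i}=1$ is tight this merely replaces the pair of caps $Y_{a_i},Y_{b_i}\le 1$ by the single equality, which again keeps the system inside a matroid-intersection description (one can absorb it by contracting); if it is not tight it can be dropped. Hence every vertex is integral, i.e. $y_v\in\{0,1\}$ and, the graphic-matroid part being integral, $z_e\in\{0,1\}$ as well.

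The main obstacle I anticipate is the bookkeeping to make the ``matroid intersection / laminar cap'' invocation precise: one must verify that the caps $\sum_{v\in\pri(u)}y_v\le 1$, the global $\sum_v y_v\le k$, and the equalities coming from tight pseudoroots genuinely assemble into a single laminar family (so that they define a second matroid on $E(H)$), and that $\spp(H)$ is exactly the base polytope of the graphic matroid of $H$ so that Edmonds' matroid intersection theorem applies to give integrality of the intersection. The pseudoroot \emph{lower} bounds are the one ingredient not of laminar-cap form, so the cleanest route is the reduction above — drop the slack ones, contract along the tight ones — rather than trying to force them into the matroid framework directly. Modulo this, the argument is a direct appeal to integrality of matroid intersection polytopes as in Schrijver~\cite{Schr-book}, exactly paralleling the corresponding step for matroid median in~\cite{KKNSS11}.
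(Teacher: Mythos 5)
Your proposal is essentially the paper's argument, just phrased in the language of matroid intersection rather than total unimodularity: the paper uncrosses the tight $\spp(H)$ constraints to a chain, observes that the supports of \eqref{nlp:1}--\eqref{nlp:3} form a second laminar family on the $y$-variables, and concludes that the tight-constraint matrix is an interval/laminar-intersection matrix, hence TU, hence the vertex is integral. Your graphic-matroid-meets-laminar-matroid framing invokes the same underlying integrality result.

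The one place your route is more roundabout than it needs to be is the pseudoroot lower bounds \eqref{nlp:2}. You correctly sense that they do not fit a ``matroid cap'' description and propose to drop slack ones and contract tight ones, but this special-casing is unnecessary if you argue at the level of the tight-constraint matrix, as the paper does: total unimodularity cares only about the rows $\mathbf 1_{\pri(a_i)\cup\pri(b_i)}$, not about whether the inequality is $\le$ or $\ge$. Since each representative lies in at most one pseudoroot and the $\pri(\cdot)$ are disjoint, the supports $\pri(u)$, $\pri(a_i)\cup\pri(b_i)$, and $V$ together form a laminar family outright, so \eqref{nlp:1}--\eqref{nlp:3} can all be absorbed into one laminar block with no contraction step. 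Conversely, you elide the one step the paper spells out, namely that a maximum linearly independent set of tight rank constraints from $\spp(H)$ can be uncrossed to a chain; if you appeal to Edmonds' matroid-intersection integrality theorem as a black box this is implicit, but it is worth stating since $\spp(H)$ has exponentially many rank constraints and the uncrossing is what makes the tight system laminar. With those two adjustments your argument matches the paper's.
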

\begin{proof}
Let $(y,z)$ denote any basic feasible solution. The constraints from  \eqref{nlp:1}-\eqref{nlp:3} define a laminar
family on just $y$ variables. By a standard uncrossing argument, we can choose a maximum linearly independent set of
tight rank constraints in~\eqref{nlp:4} to be a chain on $y,z$ variables. Thus a maximum linearly independent set of
tight constraints in $\mathsf{LP_{new}}$ can be described as the intersection of two laminar families-- this is always
a totally unimodular matrix, and hence $(y,z)$ must be integral.
\end{proof}

$\mathsf{LP_{new}}$ can be solved exactly in polynomial time to obtain an extreme point solution using the Ellipsoid
algorithm and the approach in Jain~\cite{J01}; by the above lemma this solution is integral. Finally using Lemma 3.3
in~\cite{KKNSS11}, any integral solution to $\mathsf{LP_{new}}$ of value $L$ is also a valid solution to the \kmf
instance \J of value at most $3\cdot L$. Altogether we obtain an integral solution $S^*$ to \J of value at most 12
times the optimum of $\LP(\I)$. Combined with the relation between instances \I and \J, we have $S^*$ is a valid
solution to \I of objective at most 16 times the optimum of \I, thereby proving Theorem~\ref{th:gen-kmed-forest}.

\bibliography{lvrp}
\bibliographystyle{plain}

\appendix

\section{Example comparing $k$-median, $k$-tree and \kmf}\label{app:example}
We give an example which shows that near-optimal solutions to the $k$-median, $k$-tree and \kmf problems can be very
far from each other. This implies that an approximation algorithm for \kmf must simultaneously take into account both
the median and tree parts of its objective. (For eg. we cannot merely solve $k$-median and $k$-tree separately and take
the better of those solutions.)

The underlying metric consists of six vertices $\{u_0,u_1,u_2\}\bigcup \{v_0,v_1,v_2\}$. Let $\ell$ be a parameter that
will be set to be arbitrarily large. The distance between any $u_i$ and $v_j$ (for all $i,j\in\{0,1,2\}$) is infinite;
$d(u_0,u_1)=d(u_0,u_2)=\ell^3$, $d(u_1,u_2)=\ell^2$; and $d(u_0,u_1)=d(u_0,u_2)=\ell^4$, $d(u_1,u_2)=\ell$. The weights
of vertices are $q(u_1)=q(u_2)=q(v_1)=q(v_2)=\ell^4$ and $q(u_0)=q(v_0)=1$. The bound $k=4$ and parameter $\rho=\ell^2$
for the \kmf problem. Let $S_{med}$, $S_{tree}$ and $S_{kmf}$ denote solutions that are $o(\ell)$-approximately optimal
for the $k$-median, $k$-tree and \kmf objectives respectively. We claim that $S_{med}$, $S_{tree}$ and $S_{kmf}$ are
mutually disjoint.

It can be checked directly that the optimal $k$-median value is $\ell^3+\ell^4\le 2\,\ell^4$. Moreover the only
solution of value $o(\ell^5)$ is $\{u_1,u_2,v_1,v_2\}$; so $S_{med}$ consists of just this solution.

The optimal $k$-tree value is $\ell+\ell^2\le 2\,\ell^2$. For any solution $F\in S_{tree}$ (i.e. having value
$o(\ell^3)$), we must have $u_0,v_0\in F$, $|F\cap \{u_1,u_2\}|=1$ and $|F\cap \{v_1,v_2\}|=1$. So $S_{tree}$ consists
of these 4 solutions.

For the \kmf objective it can be seen that the optimal value is $\rho\cdot \ell^3 + \ell^4\cdot \ell=2\,\ell^5$; from
the solutions $\{u_1,u_2,v_0,v_1\}$ and $\{u_1,u_2,v_0,v_2\}$. Moreover, any other solution has value $\Omega(\ell^6)$;
so $S_{kmf}$ consists of the above two solutions. Clearly $S_{med}$, $S_{tree}$ and $S_{kmf}$ are disjoint.

\end{document}